\let\proof\relax   
\newtheorem{lemma}{Lemma}
\newtheorem{theorem}{Theorem}
\newtheorem{example}{Example}
\newcommand*{\transpose}{%
  {\mathpalette\@transpose{}}%
}
\begin{document}

\newcommand{\SB}[3]{
\sum_{#2 \in #1}\biggl|\overline{X}_{#2}\biggr| #3
\biggl|\bigcap_{#2 \notin #1}\overline{X}_{#2}\biggr|
}

\newcommand{\Mod}[1]{\ (\textup{mod}\ #1)}

\newcommand{\overbar}[1]{\mkern 0mu\overline{\mkern-0mu#1\mkern-8.5mu}\mkern 6mu}

\makeatletter
\newcommand*\nss[3]{%
  \begingroup
  \setbox0\hbox{$\m@th\scriptstyle\cramped{#2}$}%
  \setbox2\hbox{$\m@th\scriptstyle#3$}%
  \dimen@=\fontdimen8\textfont3
  \multiply\dimen@ by 4             
  \advance \dimen@ by \ht0
  \advance \dimen@ by -\fontdimen17\textfont2
  \@tempdima=\fontdimen5\textfont2  
  \multiply\@tempdima by 4
  \divide  \@tempdima by 5          
  \ifdim\dimen@<\@tempdima
    \ht0=0pt                        
    \@tempdima=\fontdimen5\textfont2
    \divide\@tempdima by 4          
    \advance \dimen@ by -\@tempdima 
    \ifdim\dimen@>0pt
      \@tempdima=\dp2
      \advance\@tempdima by \dimen@
      \dp2=\@tempdima
    \fi
  \fi
  #1_{\box0}^{\box2}%
  \endgroup
  }
\makeatother

\makeatletter
\renewenvironment{proof}[1][\proofname]{\par
  \pushQED{\qed}%
  \normalfont \topsep6\p@\@plus6\p@\relax
  \trivlist
  \item[\hskip\labelsep
        \itshape
    #1\@addpunct{:}]\ignorespaces
}{%
  \popQED\endtrivlist\@endpefalse
}
\makeatother

\makeatletter
\newsavebox\myboxA
\newsavebox\myboxB
\newlength\mylenA

\newcommand*\xoverline[2][0.75]{%
    \sbox{\myboxA}{$\m@th#2$}%
    \setbox\myboxB\null
    \ht\myboxB=\ht\myboxA%
    \dp\myboxB=\dp\myboxA%
    \wd\myboxB=#1\wd\myboxA
    \sbox\myboxB{$\m@th\overline{\copy\myboxB}$}
    \setlength\mylenA{\the\wd\myboxA}
    \addtolength\mylenA{-\the\wd\myboxB}%
    \ifdim\wd\myboxB<\wd\myboxA%
       \rlap{\hskip 0.5\mylenA\usebox\myboxB}{\usebox\myboxA}%
    \else
        \hskip -0.5\mylenA\rlap{\usebox\myboxA}{\hskip 0.5\mylenA\usebox\myboxB}%
    \fi}
\makeatother

\xpatchcmd{\proof}{\hskip\labelsep}{\hskip3.75\labelsep}{}{}

\pagestyle{plain}

\title{\fontsize{21}{28}\selectfont On the Capacity of Single-Server Multi-Message \\ Private Information Retrieval with Side Information}


\author{Anoosheh Heidarzadeh, Brenden Garcia, Swanand Kadhe, Salim El Rouayheb, and Alex Sprintson\thanks{A.~Heidarzadeh, B.~Garcia, and A.~Sprintson are with the Department of Electrical and Computer Engineering, Texas A\&M University, College Station, TX 77843 USA (E-mail: \{anoosheh, brendengarcia, spalex\}@tamu.edu).}\thanks{S.~Kadhe is with the Department of Electrical Engineering and Computer Sciences, University of California, Berkeley, CA 94720 USA (E-mail: swanand.kadhe@berkeley.edu).}\thanks{S.~El Rouayheb is with the Department of Electrical and Computer Engineering, Rutgers University, Piscataway, NJ 08854 USA (E-mail: sye8@soe.rutgers.edu).}\thanks{The  work of the fourth author was supported in part by NSF Grant CCF 1817635 and the Google Faculty Research Award. }}


\maketitle 

\thispagestyle{plain}

\begin{abstract} 
We study Private Information Retrieval with Side Information (PIR-SI) in the single-server multi-message setting. In this setting, a user wants to download $D$ messages from a database of $K\geq D$ messages, stored 
on a single server, without revealing any information about the identities of the demanded messages to the server. The goal of the user is to achieve information-theoretic privacy by leveraging the side information about the database. The side information consists of a random subset of $M$ messages in the database which could have been obtained in advance from other users or from previous interactions with the server. The identities of the messages forming the side information are initially unknown to the server. Our goal is to characterize the capacity of this setting, i.e., the maximum achievable download rate.  

In our previous work, we  have established the PIR-SI capacity for the special case in which the user wants a single message, i.e., $D=1$ and showed that the capacity can be achieved through the Partition and Code (PC) scheme. In this paper, we focus  on the case when the user wants multiple messages, i.e., $D>1$. Our first result is that if the user wants more messages than what they have as side information, i.e., $D>M$, then the capacity is $\frac{D}{K-M}$, and it can be achieved using a scheme based on the Generalized Reed-Solomon (GRS) codes. In this case, the user must learn all the messages in the database in order to obtain the desired messages. Our second result shows that this may not be necessary when $D\leq M$, and the capacity in this case can be higher. We present a lower bound on the capacity based on an achievability scheme which we call Generalized Partition and Code (GPC).
\end{abstract}

%



\section{Introduction} 
\label{sec:intro}

In the 
{Private Information Retrieval (PIR) problem}, a user wants to privately download a message belonging to a database with copies stored on a single or multiple remote servers \cite{Chor:PIR1995}. In this paper, we focus on the single-server PIR problem. While the multi-server PIR problem has received extensive attention in the literature (see, e.g., \cite{beimel2002breaking,gasarch2004survey, yekhanin2010private,Sun2017,JafarPIR3new}), 
the model and the information-theoretic privacy guarantees there still hinge on the assumption of restricted collusion among the servers. Not only may this assumption be hard to verify and enforce, but it may also be infeasible in some practical systems; for instance when the servers are owned and operated by same entity. With single-server PIR, the no-collusion assumption is no longer {relevant}.

We focus on information-theoretic PIR~\cite{Sun2017}, which guarantees unconditional privacy irrespective of the computational power of the servers. The drawback of this strong requirement is that in the single-server case, the user has to download the whole database to hide which message they are interested in~\cite{Chor:PIR1995}. This result has led to the aforementioned large body of work on multi-sever (information-theoretic) PIR with a constraint on collusion. 
Aside from information-theoretic PIR is computational PIR which relies mainly on certain limits on the computational power of the server(s), 
assuming the hardness of certain mathematical problems, such as factoring large numbers. 
It was shown in \cite{kushilevitz1997replication} that single-server computational PIR does not require downloading the whole database. Since then, single-server PIR has been almost exclusively studied under the computational privacy model, see e.g.,~\cite{chor1997private,cPIRPoly,gentry2005single}. However, computational PIR schemes typically require homomorphic encryption, and suffer from prohibitive computational cost \cite{sion2007computational}. 

Recently, in~\cite{Kadhe2017}, the authors initiated the study of single-server PIR with Side Information (PIR-SI), wherein the user knows a random subset of messages that is unknown to the server. This side information could have been obtained from other trusted users or from previous interactions with the server.
It was shown that the side information enables the user to achieve  information-theoretic privacy without having to download the entire database. The savings in the download cost depend on whether the user wants to protect the privacy of both the requested message and the messages in the side information, or only the privacy of the requested message. 

\subsection{Main Contributions of This Work}
\label{sec:contributions}


In this work, we consider the single-server multi-message PIR-SI problem in which a user wishes to download $D$ messages from a database of $K$ messages, stored on a single remote server. The user has a random subset of $M$ messages, and the identities of these messages are unknown to the server. We focus on the case in which the user wishes to protect only the identities of the requested messages, and restrict our attention to {PIR-SI} schemes that achieve information-theoretic privacy. 
Our goal is to characterize the capacity of this setting, i.e., the maximum achievable download rate over all such PIR-SI schemes. 



For the regime where $D > M$, we characterize the capacity of the single-server multi-message PIR-SI problem, 
as \mbox{$D/(K-M)$} (Theorem~\ref{thm:1}). 
The converse is based on a mix of combinatorial and information-theoretic arguments, whereas the achievability is based on Generalized Reed-Solomon (GRS) codes~\cite{Roth:06}.
Interestingly, the number of downloaded symbols in this case (i.e., $K-M$) is the same as that in the single-server single-message PIR-SI problem studied in~\cite{Kadhe2017} when the user wishes to protect the identities of messages in both their demand and their side information. This implies that, in the regime $D>M$, protecting only the identities of the demanded messages is as costly (in terms of number of downloaded symbols) as protecting the identities of messages in both the demand and the side information. 
 

When $D\leq M$, we present a lower bound on the capacity of the single-server multi-message PIR-SI problem (Theorem~\ref{thm:2}). 
The achievability results rely on two schemes: (i) 
a scheme termed Generalized Partition and Code (GPC), which is a generalization of the Partition and Code scheme previously presented in~\cite{Kadhe2017}, and (ii) a scheme based on the GRS codes, which is a simple modification of the Maximum Distance Separable (MDS) Code scheme in~\cite{Kadhe2017}. We summarize our main results in Table~\ref{tbl:summary}. The result for $D=1$ is from~\cite{Kadhe2017}, and is presented here for completeness.

\begin{table}[t!]
\caption{Summary of our main results on the capacity of single server PIR-SI.}
\label{tbl:summary}
\centering
\begin{tabular}{|c|c|c|c|}
\hline
Parameters & $D = 1$ & $2\leq D\leq M$ & $D>M$\\
\hline
Capacity & 
\begin{tabular}{@{}c@{}} 
$\left\lceil\frac{K}{M+1}\right\rceil^{-1}$  \\ (\cite[Theorem~1]{Kadhe2017}) 
\end{tabular} 
& 
\begin{tabular}{@{}c@{}} 
Open \\ (Lower bound \\ in Theorem~\ref{thm:2})
\end{tabular} & 
\begin{tabular}{@{}c@{}}$\frac{D}{K-M}$ \\ (Theorem~\ref{thm:1})\end{tabular}\\
\hline
\begin{tabular}{@{}c@{}}
Achievability \\ 
Schemes \end{tabular}
& \begin{tabular}{@{}c@{}}Partition \\ and Code (PC)\end{tabular} &
\begin{tabular}{@{}c@{}} Generalized PC \\ and GRS Code\end{tabular} &
\begin{tabular}{@{}c@{}} GRS \\ Code\end{tabular}\\
\hline
\end{tabular} 
\end{table}

\subsection{Related Work}
\label{sec:related-work}
The replication-based model where multiple servers store copies of the database has been predominantly studied in the PIR literature, with  breakthrough results in the past few years 
 (e.g., \cite{Sun2017, JafarPIR3new, yekhanin2010private, beimel2001information, beimel2002breaking,gasarch2004survey}). The multi-message PIR problem for multiple non-colluding servers was recently considered in~\cite{Banawan2017} under information-theoretic privacy. There has also been a renewed  interest in  PIR for the case in which the data is  stored on the servers using erasure codes, which result in  better storage overhead compared to the traditional replication techniques~\cite{shah2014one, chan2014private, tajeddine2016private, extended, BU18, fazeli2015pir, blackburn2016pir, freij2016private}. 


Most relevant to our setting are works on the capacity of PIR when some side information is present at the user for single server~\cite{Kadhe2017,HKS2018} and multiple servers~\cite{Tandon2017,Wei2017CacheAidedPI,Wei2017FundamentalLO,KGHERS2017,Chen2017side,Maddah2018}. These works differ mainly in their side information models. 
The scenario in which the user can choose the side information, referred to as cache-aided PIR, is considered in~\cite{Tandon2017,Wei2017CacheAidedPI,Wei2017FundamentalLO} for the case of multiple servers. In~\cite{Tandon2017}, the side information can be any function of the database, and is known to all the servers. In~\cite{Wei2017CacheAidedPI}, the side information is uncoded and not known to the servers, whereas, in~\cite{Wei2017FundamentalLO}, the servers are partially aware of the uncoded side information. 

The single-server single-message scenario where the side information available at the user is in the form of a random subset of messages is considered in~\cite{Kadhe2017}. 
In~\cite{KGHERS2017}, it is shown that this type of side information also helps in multi-server single-message scenario when the user only wants to protect the requested message. For the same type of side information, the multi-server single-message and the multi-server multi-message scenarios are respectively studied in~\cite{Chen2017side} and~\cite{Maddah2018}, when the user wants to protect both the requested message(s) and the side information messages. The single-server single-message case where the side information is a random linear combination of a random subset of messages is considered in~\cite{HKS2018} when the user wants to protect the requested message. 



\section{Problem Formulation}\label{sec:PF}
For a positive integer $i$, we denote $\{1,\dots,i\}$ by $[i]$. With an abuse of notation, we denote a random variable and its realizations by the same uppercase letter whenever clear from the context. With an abuse of notation we denote a random variable and its realization by the same uppercase letter whenever clear from the context; whenever clarification is needed we denote a random variable with a bold symbol, e.g. $\boldsymbol{X}$, and its realization without bold face, e.g., $X$. Also, we denote the (Shannon) entropy of a random variable $X$ by $H(X)$, and the conditional entropy of a random variable $X$ given a random variable $Y$ by $H(X|Y)$. 


Let $\mathbb{F}_{q^m}$ be an extension field of the finite field $\mathbb{F}_q$ for some prime $q\geq 2$ and $m\geq 1$. 
There is a server storing a set $X$ of $K\geq 1$ messages, $X \triangleq \{X_1,\dots,X_K\}$, where each $X_i$ is chosen independently and uniformly from $\mathbb{F}_{q^m}$, i.e., $H(X_1)=H(X_2)=\dots=H(X_K)=L$ and $H(X_1,X_2,\dots,X_K)=KL$ where $L \triangleq m \log_2 q$.
For any $T\subseteq [K]$, denote $\{X_i\}_{i\in T}$ by $X_T$. 
There is a user that wishes to retrieve $D$ ($1\leq D\leq K$) messages $X_W$ from the server for some $W \subseteq [K]$, $|W|=D$. Additionally the user knows $M$ ($0 \leq M \leq K-D$) messages $X_S$ for some $S \subseteq [K]\setminus W$, $|S|=M$. 
(Note that $W$ and $S$ are disjoint sets.) We refer to $W$ as \emph{demand index set}, $X_W$ as \emph{demand}, and $D$ as \emph{demand size}, and refer to $S$ as \emph{side information index set}, $X_S$ as \emph{side information}, and $M$ as \emph{side information size}. 


In this work, we assume that $\boldsymbol{W}$ is distributed uniformly over all subsets of $[K]$ of size $D$,
\[\mathbb{P}(\boldsymbol{W}=W) = \frac{1}{\binom{K}{D}}, \quad W\subset [K], |W|=D,\] and $\boldsymbol{S}$ is uniformly distributed over $[K]\setminus W$, 
\begin{equation*}
\mathbb{P}(\boldsymbol{S}=S|\boldsymbol{W}=W) = 
\left\{\begin{array}{ll}
\frac{1}{\binom{K-D}{M}}, & S\subseteq [K]\setminus W,\\
0, & \text{otherwise}.
\end{array}\right.	
\end{equation*} 


The server does not know the realizations $S$ and $W$ a priori, whereas the size of $W$ (i.e.. $D$), the size of $S$ (i.e., $M$), and the distribution of $\boldsymbol{W}$ as well as the conditional distribution of $\boldsymbol{S}$ given $\boldsymbol{W}$ are known to the server a priori.

Given $S$, $W$, and $X_S$, to retrieve $X_W$, the user sends a query $Q^{[W,S]}$ to the server, which is a (potentially stochastic) function of $W$, $S$, and $X_S$, and is independent of $X_{[K]\setminus S}$, the messages that are not in the user's side information. 

The query $Q^{[W,S]}$ is required to protect the privacy of the user's demand index set $W$ at the server, i.e., \[\mathbb{P}(\boldsymbol{W}= W^{*}| Q^{[W,S]},X)= \frac{1}{\binom{K}{D}}\quad W^{*} \subset [K], |W^{*}|=D.\] We refer to this condition as the \emph{privacy condition}. 
This condition is also known as \emph{$W$-privacy} \cite{Kadhe2017}. A stronger type of privacy is \emph{$(W,S)$-privacy}, also introduced in~\cite{Kadhe2017}, where both the user's demand and side information index sets must remain private to the server. Although our main focus in this work is on $W$-privacy, we will show that $W$-privacy implies $(W,S)$-privacy in certain cases, depending on $D$ and $M$.

Upon receiving $Q^{[W,S]}$, the server sends the user an answer $A^{[W,S]}$, which is a (deterministic) function of $Q^{[W,S]}$ and $X$, i.e., \[H({A}^{[W,S]}| Q^{[W,S]},X) = 0.\] 

The answer $A^{[W,S]}$ together with $X_S$ must enable the user to retrieve $X_W$, i.e., \[H({X}_W| A^{[W,S]}, Q^{[W,S]}, X_S)=0.\] We refer to this condition as the \emph{recoverability condition}. 


The \emph{single-server multi-message Private Information Retrieval with Side Information (PIR-SI)} problem is to design a query $Q^{[W,S]}$ and an answer $A^{[W,S]}$ for any $W$ and $S$ that satisfy the privacy and recoverability conditions.

We refer to a collection of $Q^{[W,S]}$ and $A^{[W,S]}$ for all $W$ of fixed size $D$ and all $S$ of fixed size $M$ as a \emph{PIR-SI protocol}. 


The \emph{rate} of a PIR-SI protocol is defined as the ratio of the entropy of $D$ messages, i.e., $DL$, to the average entropy of the answer, i.e., $H({A}^{[\boldsymbol{W},\boldsymbol{S}]})=\sum H({A}^{[W,S]})\mathbb{P}(\boldsymbol{W}=W,\boldsymbol{S}=S)$, where the sum is over all $W$ of size $D$ and all $S$ of size $M$. 

The \emph{capacity} of the PIR-SI problem is defined as the supremum of rates over all PIR-SI protocols.

In this work, our goal is to characterize (or establish a lower bound on) the capacity of PIR-SI problem for all $D$ and $M$, and to design a PIR-SI protocol that achieves the capacity (or the derived lower bound on the capacity). 

\section{Main Results}\label{sec:MR}
In this section, we present our main results. Theorem~\ref{thm:1} characterizes the capacity of PIR-SI problem for $D>M$, and Theorem~\ref{thm:2} provides a lower bound on the capacity of PIR-SI problem for $D\leq M$. The proofs of Theorems~\ref{thm:1} and~\ref{thm:2} can be found in Sections~\ref{sec:PIRSI-I} and~\ref{sec:PIRSI-II}, respectively. 

\begin{theorem}\label{thm:1}
The capacity of the PIR-SI problem with $K$ messages, side information size $M$, and demand size $D>M$ is given by $D/(K-M)$.
\end{theorem}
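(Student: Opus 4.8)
The plan is to prove matching converse and achievability bounds showing the capacity equals $D/(K-M)$. For the achievability direction I would exhibit a concrete scheme: the user treats the $K$ messages as a length-$K$ information vector over $\mathbb{F}_{q^m}$ and asks the server to return $K-M$ coded symbols, namely evaluations of an appropriate polynomial — equivalently, the server applies a $(K-M)\times K$ Generalized Reed–Solomon generator matrix $G$ to $X$ and returns $GX$. The key design requirement is that the user, knowing the $M$ side-information symbols $X_S$, can use them to eliminate the corresponding $M$ coordinates and recover the remaining $K-M$ unknowns (which include the $D$ demanded symbols, since $D\le K-M$ when $D>M$ forces $K-M\ge D$ — indeed $M\le K-D$ gives $K-M\ge D$). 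This works provided every $(K-M)\times(K-M)$ submatrix of $G$ obtained by deleting $M$ columns is invertible, which holds automatically for a GRS/Vandermonde-type generator with distinct evaluation points. For privacy, the query (the matrix $G$, or its evaluation points) must be chosen so that, conditioned on $Q$ and $X$, every size-$D$ set $W$ is equally likely to be the demand; here one would randomize the assignment of the user's "unknown" coordinates among all $\binom{K-M}{?}$ possibilities, or more simply argue that since the scheme forces the user to learn \emph{all} $K-M$ non-side-information symbols, the answer reveals nothing beyond "the side information is some $M$-subset," and a uniform random choice of which $M$ columns to cancel makes $W$ uniform. Computing $H(A^{[W,S]}) = (K-M)L$ then gives rate $DL/((K-M)L) = D/(K-M)$.

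For the converse I would combine a combinatorial counting argument with standard information-theoretic manipulations in the style of \cite{Kadhe2017, Sun2017}. Fix a query realization $Q = Q^{[W,S]}$. By the recoverability condition, $H(X_W \mid A, Q, X_S) = 0$, so the answer together with any valid side-information set determines the corresponding demand. The crucial step is to show that, because $D > M$, the privacy condition forces the answer to effectively determine \emph{all} $K - M$ messages outside \emph{every} admissible side-information set — intuitively, if the answer left some message outside $S\cup W$ undetermined, one could find an alternative $(W',S')$ consistent with $Q$ for which recoverability fails, contradicting that $Q$ must be usable for a range of demand/side-information pairs wide enough to satisfy $W$-privacy. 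Making this precise is the heart of the proof: I would argue that for a fixed $Q$ appearing with positive probability, the set of $(W,S)$ pairs that could have generated $Q$ must, by privacy, "cover" all size-$D$ demand sets; and since $D > M$, the union of the complements $[K]\setminus S$ over the admissible $S$'s, intersected appropriately with the recoverability constraint, pins down $X_{[K]\setminus S^\star}$ for a common reference set — yielding $H(A \mid Q) \ge (K-M)L$.

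Concretely, the chain I expect to write is: $H(A) \ge H(A \mid Q, X_S) = H(A, X_{[K]\setminus S} \mid Q, X_S) - H(X_{[K]\setminus S} \mid A, Q, X_S) = H(X_{[K]\setminus S} \mid Q, X_S) - H(X_{[K]\setminus S} \mid A, Q, X_S)$, where the first term equals $(K-M)L$ by independence of the messages and the independence of $Q$ from $X_{[K]\setminus S}$, and the second term is shown to be $0$ using the privacy-plus-recoverability argument above. Averaging over $(W,S)$ preserves the bound, so $H(A^{[\boldsymbol W,\boldsymbol S]}) \ge (K-M)L$ and the rate is at most $DL/((K-M)L) = D/(K-M)$.

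The main obstacle is the converse step establishing $H(X_{[K]\setminus S}\mid A,Q,X_S)=0$: one must carefully use the $W$-privacy condition to manufacture enough alternative $(W',S')$ pairs consistent with a fixed high-probability query $Q$, and then exploit $D>M$ to conclude that the demanded-and-unknown coordinates under these alternatives collectively exhaust $[K]\setminus S$. The counting here — ensuring that privacy really does force the answer to reconstruct the full complement rather than just the $D$ demanded symbols — is where the $D > M$ hypothesis is essential and where the argument is most delicate; the achievability and the entropy bookkeeping are comparatively routine.
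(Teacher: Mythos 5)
Your plan follows the same route as the paper: GRS/Vandermonde achievability with $K-M$ downloaded coded symbols (where, incidentally, no randomization of ``which $M$ columns to cancel'' is needed or even visible to the server --- the query is literally identical for every $(W,S)$, which is the whole privacy argument), and a converse via $H(A)\ge H(A|Q,X_S)=H(X_{[K]\setminus S}|Q,X_S)-H(X_{[K]\setminus S}|A,Q,X_S)$ with the key claim $H(X_{[K]\setminus S}|A,Q,X_S)=0$. The gap is that you never prove that key claim; you only assert that privacy plus $D>M$ should force it, and this is exactly where the real work lies. Two concrete ingredients are missing. First, the necessary condition you invoke (every candidate demand $W^{*}$ must be pairable with some $M$-set $S^{*}$ from which $X_{W^{*}}$ is decodable, the paper's Lemma~\ref{prop:1}) only yields one recoverable $D$-subset per candidate demand; your ``union of complements covers $[K]\setminus S$'' sketch does not assemble these into recovery of \emph{all} of $X_{[K]\setminus S}$ from a \emph{single} $M$-set. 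The paper does this by a recursive growth argument (Lemma~\ref{lem:B1}): if $X_{W_1}$ is recoverable from some $M$-set $X_{S_1}$, pick $i\notin W_1\cup S_1$ and observe that, precisely because $M+1\le D$, the set $S_1\cup\{i\}$ padded with $D-M-1$ further indices is itself a legitimate candidate demand; applying the necessary condition to it yields a new $M$-set $S_2$ from which a strictly larger set is recoverable, and iterating reaches size $K-M$. This chaining is where $D>M$ actually enters, and it is absent from your proposal.

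Second, even with that argument one only obtains \emph{some} $M$-set $S_{*}$ satisfying $H(X_{[K]\setminus S_{*}}|A,Q,X_{S_{*}})=0$, whereas your entropy chain needs the statement for the user's \emph{actual} $S$: since $Q$ is permitted to depend on $X_S$, the identity $H(X_{[K]\setminus S_{*}}|Q,X_{S_{*}})=(K-M)L$ can fail when $S_{*}\neq S$, so you cannot simply substitute a different reference set into the bookkeeping. The paper needs a separate contradiction argument (Lemma~\ref{lem:B2}): letting $\mathcal{S}$ be the collection of all full-recovery $M$-sets, if $S\notin\mathcal{S}$ then the server --- who knows the protocol satisfies privacy --- could rule out every member of $\mathcal{S}$ as a possible side information set, and rerunning the growth argument over the remaining candidates would produce a full-recovery set outside $\mathcal{S}$, a contradiction. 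Without these two steps your converse is a plausible outline rather than a proof; the achievability half and the entropy bookkeeping are essentially correct and match the paper.
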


The proof of converse is by upper bounding the rate of any PIR-SI protocol for $D>M$ using a mixture of combinatorial and information-theoretic arguments based on a necessary condition imposed by both the privacy and recoverability conditions 
(see Lemma~\ref{prop:1}). The achievability proof relies on a PIR-SI scheme based on the Generalized Reed-Solomon (GRS) codes~\cite{Roth:06}, termed GRS Code protocol, which is a simple modification of the Maximum Distance Separable (MDS) Code scheme of~\cite{Kadhe2017} and achieves the derived upper bound on the rate (see Section~\ref{sec:PIRSI-I}). 

\textbf{Remark 1.} 
It was shown in~\cite{Chor:PIR1995} that when there is a single server storing $K$ messages, and the user demands a single message ($D=1$), and does not know any of the messages a priori ($M=0$), in order to guarantee the privacy of the demand index set, the user needs to download $K$ units of information, e.g., all the $K$ messages; and thus the capacity is $1/K$. This result matches the result of Theorem~\ref{thm:1} for $D=1$ and $M=0$. In general, the result of Theorem~\ref{thm:1} shows that when the user knows $M$ messages a priori, and demands $D>M$ other messages, the capacity is $D/(K-M)$. That is, the user needs to download $K-M$ units of information that they do not know a priori, e.g., $K-M$ MDS-coded combinations of all the $K$ messages. 


\textbf{Remark 2.}
The result of Theorem~\ref{thm:1} for $D=1$ matches the capacity of single-server single-message PIR-SI problem, studied in~\cite{Kadhe2017}, where $(W,S)$-privacy is required. Moreover, the result of~\cite{Kadhe2017} for single-server single-message PIR-SI problem when $(W,S)$-privacy is required naturally extends to the multi-message setting, i.e., the capacity of single-server multi-message PIR-SI problem when $(W,S)$-privacy is required is $D/(K-M)$. 
By comparing this result with that of Theorem~\ref{thm:1}, it can be seen that for $D>M$, the capacity of single-server multi-message PIR-SI problem when $W$-privacy is required is the same as that of single-server multi-message PIR-SI problem when $(W,S)$-privacy is required. 

\begin{theorem}\label{thm:2}
The capacity of the PIR-SI problem with $K$ messages, side information size $M$, and demand size $D\leq M$ is lower bounded by 
\[\max\left\{\frac{D}{K-M}, D\left(K-\left\lfloor\frac{M}{D}\right\rfloor\left\lfloor\frac{K}{D+\lfloor M/D\rfloor}\right\rfloor\right)^{-1}\right\}\] if $\frac{K-D}{D+\lfloor M/D\rfloor}\leq \left\lfloor\frac{K}{D+\lfloor M/D\rfloor}\right\rfloor$, and \[\max\left\{\frac{D}{K-M}, \left\lceil\frac{K}{D+\lfloor M/D\rfloor}\right\rceil^{-1}\right\}\] if $\frac{K-D}{D+\lfloor M/D\rfloor}> \left\lfloor\frac{K}{D+\lfloor M/D\rfloor}\right\rfloor$.
\end{theorem}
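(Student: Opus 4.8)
The plan is to establish both lower bounds by exhibiting two achievability schemes---the GRS Code protocol and the Generalized Partition and Code (GPC) protocol---and taking the larger of their rates; since the capacity is the supremum of achievable rates, it is at least this maximum. The first term $D/(K-M)$ comes from the GRS Code protocol, which is valid for all $D$ and $M$ and coincides with the scheme used for the achievability direction of Theorem~\ref{thm:1}: fix (after subpacketizing the messages, if necessary, so that the underlying field has at least $K$ elements) a $(K-M)\times K$ generator matrix $G$ of a Generalized Reed--Solomon code~\cite{Roth:06}, and let the query be the fixed instruction to return the matrix--vector product of $G$ with the length-$K$ vector of messages. Recoverability holds because every $K-M$ columns of $G$ are linearly independent, so the submatrix of $G$ on the columns indexed by $[K]\setminus S$ is invertible and the user recovers all of $X$---hence $X_W$---from the $K-M$ downloaded symbols together with the side information $X_S$. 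Privacy is immediate because the query depends on neither $W$ nor $S$, so the posterior of $\boldsymbol{W}$ equals its prior; the rate is $DL/((K-M)L)=D/(K-M)$.

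The $D$-dependent term comes from the GPC protocol, a generalization of the $D=1$ Partition and Code scheme of~\cite{Kadhe2017}. Set $M'=\lfloor M/D\rfloor$ (so $M'\geq 1$ since $D\leq M$), let $n=D+M'$ be the target group size, and write $K=gn+r$ with $g=\lfloor K/n\rfloor$ and $0\leq r<n$. The user draws a random partition of $[K]$ into groups, most of size $n$, conditioned on one \emph{distinguished} group containing the entire demand index set $W$ along with exactly $M'$ indices of the side information $S$; from each group the user downloads GRS-coded combinations of the messages in that group, in number equal to the group's size minus the number of side-information indices it is declared to carry, which works out to $D$ (or, for a slightly enlarged group, $D+1$) per group. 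In the distinguished group this leaves exactly $D$ unknowns---the demanded messages---against $D$ independent coded equations, so they are recovered by the MDS property; the downloads from the remaining groups are not needed for recovery and serve only to mask which group is distinguished. Summing the downloaded symbols over all groups reproduces the two claimed rates: when $r>D$ the $r$ leftover indices form one additional group of size $r$ carrying $r-D$ side-information indices, giving $D(g+1)$ symbols total and rate $\lceil K/n\rceil^{-1}$; when $r\leq D$ the leftover indices are distributed one per group into $r$ of the groups (which thus have size $n+1$ and contribute $D+1$ symbols each), giving $gD+r=K-M'g$ symbols total and rate $D\,(K-M'\lfloor K/n\rfloor)^{-1}$. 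This dichotomy is exactly the hypothesis of the theorem, as $(K-D)/n>\lfloor K/n\rfloor$ is equivalent to $r>D$.

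The step I expect to be the main obstacle is verifying the privacy condition for GPC, i.e., that $\mathbb{P}(\boldsymbol{W}=W^{*}\mid Q^{[W,S]},X)=\binom{K}{D}^{-1}$ for every $D$-subset $W^{*}$. Since the GPC query is a randomized function of $W$ and $S$ alone, this reduces to showing that the law of the query is the same under $\boldsymbol{W}=W^{*}$ for every $W^{*}$; equivalently, for a fixed realized query, the number of compatible pairs (side-information set, partition realization) must be independent of which $D$-subset is designated as the demand. This is precisely where the hypothesis $D\leq M$ and the choice $M'=\lfloor M/D\rfloor$ are used: there must be enough side information to complete the distinguished group for every candidate demand set, while the near-uniform group sizes must prevent the download pattern from betraying which group is distinguished. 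I would carry this out by a symmetry-and-counting argument in the spirit of the $D=1$ analysis in~\cite{Kadhe2017}, accounting separately for the choice of the distinguished group, the choice of which of its members are cast as side information, and the completion of the random partition, and checking that the resulting count does not depend on $W^{*}$. Finally, I would note, as in~\cite{Kadhe2017}, that in the regimes where the GPC query is $W$-private it is in fact $(W,S)$-private.
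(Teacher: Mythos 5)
Your first term is fine: the GRS protocol you describe is essentially the paper's achievability scheme (Lemma~\ref{lem:2}), and its rate $D/(K-M)$ and privacy argument (a query independent of $W$ and $S$) are exactly as in the paper.

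The gap is in your GPC construction. You condition the random partition on one \emph{distinguished} group containing the \emph{entire} demand set $W$ together with $\lfloor M/D\rfloor$ side-information indices. For $D\geq 2$ this cannot satisfy the privacy condition: the server sees the partition as part of the query, and under your rule the query distribution given any $(W,S)$ is supported only on partitions in which $W$ lies inside a single group. Hence, for any candidate $W^{*}$ that straddles two or more groups of the realized partition, $\mathbb{P}(\boldsymbol{Q}=Q\mid \boldsymbol{W}=W^{*},\boldsymbol{S}=S)=0$ for every $S$, so by Bayes' rule $\mathbb{P}(\boldsymbol{W}=W^{*}\mid Q,X)=0\neq \binom{K}{D}^{-1}$, and the posterior cannot be uniform no matter how cleverly you count completions of the partition; the defect is in the construction, not in the verification you defer. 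The paper's GPC protocol avoids this precisely by \emph{scattering} the $D$ demand indices uniformly over all $K$ positions (so the demand may occupy several groups), adding $\alpha=\lfloor M/D\rfloor$ side-information indices to every group of size $\beta$ that receives at least one demand index (and $\sigma$ to the leftover group $Q_0$), and downloading $D$ coded symbols from \emph{every} group of size $\beta$ regardless of how many demand indices it holds; the uniform posterior is then established in Lemma~\ref{lem:3} by counting over the occupancy profile $(n_0,\dots,n_\gamma)$ with its multiplicities. Your rate bookkeeping happens to reproduce the correct totals ($\rho+\gamma D$ and $(\gamma+1)D$), but the placement rule generating them must be the scattered one. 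Two smaller points: the paper keeps a single leftover group of size $\rho$ rather than redistributing the remainder into enlarged groups, and your closing claim that GPC is $(W,S)$-private is backwards --- the paper asserts $(W,S)$-privacy only for the GRS cases of Theorem~\ref{thm:2}, not for GPC.
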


For those cases of $K$, $D$, and $M$ such that the capacity is lower bounded by $D/(K-M)$, the proof relies on the GRS Code protocol (see Section~\ref{sec:PIRSI-I}). In these cases, not only is $W$-privacy achieved but also $(W,S)$-privacy is achieved. For the rest of the cases of $K$, $D$, and $M$, the proof is based on constructing a PIR-SI protocol that achieves the rate ${D/\left(K-\left\lfloor M/D \right\rfloor \lfloor K/(D+\lfloor M/D\rfloor)\rfloor\right)}$ or ${1/\lceil K/(D+\lfloor M/D\rfloor)\rceil}$, depending on $K$, $D$, and $M$. This protocol, referred to as the Generalized Partition and Code (GPC) protocol, is a generalization of the Partition and Code protocol previously introduced in~\cite{Kadhe2017} (see Section~\ref{sec:PIRSI-II}).
\textbf{Remark 3.}
Under a few divisibility assumptions, the results of Theorem~\ref{thm:2} can be further simplified. In particular, when $D$ divides $M$, and $D+\lfloor M/D\rfloor$ divides $K$, the lower bound on the capacity is ${\max\{D/(K-M),(D+M/D)/K\}}$. For a given $D$, if $M$ is sufficiently large, namely, ${M> K-D^2}$, the lower bound is equal to ${D/(K-M)}$, and it can be achieved by the GRS Code protocol (see Section~\ref{sec:PIRSI-I}); otherwise, if ${D\leq M\leq K-D^2}$, the lower bound is equal to ${(D+ M/D)/K}$, and it can be achieved by the GPC protocol (see Section~\ref{sec:PIRSI-II}) .  

\textbf{Remark 4.}
For $D>1$, the tightness of the lower bounds in Theorem~\ref{thm:2} remains unknown in general. However, these lower bounds are tight for ${D=1}$. In this case, by Theorem~\ref{thm:2}, the capacity is lower bounded by ${1/(K-M\lfloor K/(M+1)\rfloor)}$ if ${(K-1)/(M+1)\leq \lfloor K/(M+1)\rfloor}$, 
and is lower bounded by ${1/\lceil K/(M+1)\rceil}$ if ${(K-1)/(M+1)> \lfloor K/(M+1)\rfloor}$. 
It can be shown that both lower bounds are equal to ${1/\lceil K/(M+1)\rceil}$ (see Appendix). 
It was also shown in~\cite{Kadhe2017} that for $D=1$, the capacity is equal to ${1/\lceil K/(M+1)\rceil}$. This shows that the result of  Theorem~\ref{thm:2} is tight for $D=1$.



\section{PIR-SI Problem: The case of $D>M$}\label{sec:PIRSI-I}

\subsection{Proof of Converse for Theorem~\ref{thm:1}}

\begin{lemma}\label{lem:1}
The capacity of the PIR-SI problem for $D>M$ is upper bounded by $D/(K-M)$. 	
\end{lemma}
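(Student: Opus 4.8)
The plan is to prove $H(A^{[\boldsymbol W,\boldsymbol S]})\ge (K-M)L$ for every $W$-private, recoverable PIR-SI protocol with $D>M$, where $H(A^{[\boldsymbol W,\boldsymbol S]})=\sum_{W,S}\mathbb{P}(\boldsymbol W=W,\boldsymbol S=S)\,H(A^{[W,S]})$ is the average answer entropy; since the rate is $DL/H(A^{[\boldsymbol W,\boldsymbol S]})$, this yields rate $\le D/(K-M)$. The guiding idea is that when $D>M$ the side information is too small to account by itself for an entire demand set, so the privacy requirement that the answer be consistent with \emph{every} size-$D$ demand set forces the user to be able to reconstruct the \emph{whole} database from the answer, the query, and the $M$ messages they already hold --- which is exactly what the matching $(K-M)$-symbol GRS scheme does.

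The first and main step is the necessary condition behind Lemma~\ref{prop:1}: in any such protocol, $H\bigl(X_{[K]}\mid A^{[W,S]},Q^{[W,S]},X_S\bigr)=0$ for every valid pair $(W,S)$, i.e., the user inevitably learns all $K$ messages. To prove this I would fix a query realization $q$ occurring with positive probability and note that, by recoverability, $(A,q,X_S)$ already determines $X_{W\cup S}$. Then I would invoke the privacy condition, which guarantees that for \emph{every} demand set $W^\ast$ of size $D$ there is a side-information set $S^\ast$ of size $M$ making $(W^\ast,S^\ast)$ consistent with $q$, so that $(A,q,X_{S^\ast})$ also determines $X_{W^\ast}$. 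The combinatorial heart is then a covering argument: exploiting $D>M$, one selects a chain of demand sets $W^\ast_1,W^\ast_2,\dots$ whose union with $W\cup S$ grows to all of $[K]$, while arranging that each side-information set $S^\ast_i$ invoked along the chain lies inside the part of $[K]$ already shown to be recoverable --- for instance, once the uncovered index set has size at most $D$ one can take a demand set that contains it entirely, which forces its side-information set into the covered region. Feeding the recoverability identities along this chain upgrades the statement that the user recovers $X_W$ into the statement that the user recovers all of $X_{[K]}$.

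Granting this, the converse is short: for a valid pair $(W,S)$,
\[
H(A^{[W,S]})\ge H\bigl(A^{[W,S]}\mid Q^{[W,S]},X_S\bigr)\ge H\bigl(X_{[K]}\mid Q^{[W,S]},X_S\bigr)-H\bigl(X_{[K]}\mid A^{[W,S]},Q^{[W,S]},X_S\bigr),
\]
and the last term vanishes by Lemma~\ref{prop:1}. Since $Q^{[W,S]}$ is a function of $W$, $S$, and $X_S$ and is therefore independent of $X_{[K]\setminus S}$ given $X_S$, we get $H\bigl(X_{[K]}\mid Q^{[W,S]},X_S\bigr)=H\bigl(X_{[K]\setminus S}\bigr)=(K-M)L$, hence $H(A^{[W,S]})\ge (K-M)L$ for every valid $(W,S)$; averaging over $(\boldsymbol W,\boldsymbol S)$ gives $H(A^{[\boldsymbol W,\boldsymbol S]})\ge (K-M)L$, and therefore the rate is at most $D/(K-M)$.

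The step I expect to be the main obstacle is precisely the covering/chaining argument inside Lemma~\ref{prop:1}. Privacy only supplies the \emph{existence} of a compatible side-information set for each demand set --- one does not get to choose it --- so the proof has to show, using the freedom in choosing the demand sets together with $D>M$, that the chain can always be routed so that no side-information set ever forces in a message lying outside the already-recovered region; equivalently, that the total side-information cost of the covering never exceeds $M$. Making this work uniformly over all $K$, $D$, $M$ with $D>M$, and dealing with the minor subtleties introduced by stochastic queries, is the delicate part, and it is exactly here that the hypothesis $D>M$ (as opposed to $D\le M$, where the capacity is genuinely larger) enters.
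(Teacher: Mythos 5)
Your overall strategy is the same as the paper's (show that the user must be able to reconstruct the whole database from $A$, $Q$, and their actual side information $X_S$, then do the entropy count, which you carry out correctly), but the key claim is left with a genuine gap, and it is exactly the part you flag as the ``main obstacle.'' First, note that the statement $H(X_{[K]}\mid A,Q,X_S)=0$ is not ``the necessary condition behind Lemma~\ref{prop:1}'': Lemma~\ref{prop:1} only asserts that for each candidate demand set $W^{*}$ \emph{some} compatible $S^{*}$ exists, and upgrading this to full recovery from the \emph{actual} $S$ is the real content of the converse (Lemmas~\ref{lem:B1} and~\ref{lem:B2} in the paper). Your covering mechanism --- choose demand sets so that the side-information set handed to you by privacy lies inside the already-covered region --- only works in the endgame: forcing $S^{*}\subseteq C$ (the covered region) requires $[K]\setminus W^{*}\subseteq C$, i.e., the uncovered set $U$ must satisfy $U\subseteq W^{*}$, which is possible only when $|U|\leq D$. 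Already at the first step $|U|=K-D-M$, so for $K>2D+M$ no choice of $W^{*}$ gives you any control over where $S^{*}$ sits, and the chain cannot be ``routed'' as you hope; privacy supplies existence only, and the adversarial $S^{*}$ may keep pulling in uncovered indices.

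The paper resolves this with two ideas that are absent from your proposal. First (Lemma~\ref{lem:B1}), it does not insist that the recovering side-information set stay equal to $S$: it grows a recoverable set $W_1\subset W_2\subset\cdots$ while letting the $M$-element ``base'' set drift, the hypothesis $D>M$ entering because $S_1\cup\{i\}$ (of size $M+1\leq D$) fits inside a candidate demand set, so Lemma~\ref{prop:1} can be applied to it; this yields full recovery from \emph{some} $M$-subset $S_{*}$, not necessarily $S$. Second (Lemma~\ref{lem:B2}), a separate privacy-based contradiction pins the statement back to the actual $S$: the server knows the collection $\mathcal{S}$ of all full-recovery side-information sets, and if $S\notin\mathcal{S}$ the server could exclude $\mathcal{S}$ from the candidate side-information sets and re-run the growing argument to produce a full-recovery set outside $\mathcal{S}$, a contradiction. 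This second step cannot be skipped in your write-up either, because your final computation $H\bigl(X_{[K]}\mid Q,X_S\bigr)=(K-M)L$ relies on $Q$ being independent of $X_{[K]\setminus S}$ given $X_S$, which holds for the user's actual $S$ but fails if all you know is recovery from some other $S_{*}$ (the query depends on $X_{S\setminus S_{*}}$). So the proposal's endgame observation is sound, but the intermediate regime and the ``actual versus some side information'' issue both need the paper's additional machinery.
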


\begin{proof}[Proof (Sketch)]
Suppose that the user wishes to retrieve $X_W$ for a given $W\subset [K]$, $|W|=D$, and it knows $X_S$ for a given $S\subseteq [K]\setminus W$, $|S|=M$. Note that $D>M$ (by assumption). The user sends to the server a query $Q \triangleq Q^{[W,S]}$, and the server responds to the user by an answer $A \triangleq A^{[W,S]}$. We need to show that $H({A})\geq (K-M)L$. 

The proof is based on several results, formally proved later, and here we present a sketch of the proof to provide more intuition about the subsequent results.  

First, we prove a necessary condition imposed by the privacy and recoverability conditions. Specifically, we show that given $Q$ and $X$ (and $A$), for any candidate demand index set $W^{*}\subset [K]$, $|W^{*}|=D$, the server must be able to find a potential side information index set $S^{*}\subseteq [K]\setminus W^{*}$, ${|S^{*}|=M}$ such that if the user's demand index set was $W^{*}$, then the user could recover $X_{W^{*}}$ from $Q$, $A$ and $X_{S^{*}}$. If not, the server learns that $W^{*}$ is not the user's demand index set, and this violates the privacy condition. This observation, formally stated in Lemma~\ref{prop:1}, is one of the key components in the proof. Based on this observation, we then show that for $D>M$ the user must be able to recover all messages $X_{[K]\setminus S}$ from $A$ and $X_S$, i.e., $H({X}_{[K]\setminus S}|A,Q,X_{S})=0$. This is the main idea of the proof, and will be formally proved in Lemmas~\ref{lem:B1} and~\ref{lem:B2}. The rest of the proof proceeds as follows.

By the chain rule of entropy, it is easy to show that
\begin{equation*}
H(A|Q,X_{S}) = H(X_{[K]\setminus S}|Q,X_{S}).
\end{equation*} Given $Q$ and $X_S$, the user has no knowledge about $X_{[K]\setminus S}$. Thus, $X_{[K]\setminus S}$ is independent of $(Q,X_S)$, i.e., $H(X_{[K]\setminus S}|Q,X_{S}) = H(X_{[K]\setminus S})$. Then,
\begin{equation*}
H(A|Q,X_{S}) = H(X_{[K]\setminus S}) = (K-M)L	
\end{equation*} since $H(X_{[K]\setminus S}) = \sum_{i\in [K]\setminus S} H(X_i) = (K-M)L$ (by the uniformity and independence of $X_1,\dots,X_K$). Since conditioning does not increase the entropy, then 
\begin{equation*}
H(A)\geq H(A|Q,X_{S}) = (K-M)L,	
\end{equation*} as was to be shown.
\end{proof}

In the sequel, we provide the proofs of the main ingredients discussed in the proof sketch of Lemma~\ref{lem:1}. 

The following result gives a necessary condition for privacy and recoverability.

\begin{lemma}\label{prop:1}
For any $W\subset [K]$, $|W|=D$, and ${S\subseteq [K]\setminus W}$, $|S|=M$, and any ${W^{*}\subset [K]}$, $|W^{*}|=D$, there must exist ${S^{*}\subseteq [K]\setminus W^{*}}$, $|S^{*}|=M$ such that \[H(X_{W^{*}}| A^{[W,S]}, Q^{[W,S]}, X_{S^{*}}) = 0.\] 
\end{lemma}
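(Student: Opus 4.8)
\textbf{Proof proposal for Lemma~\ref{prop:1}.}

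The plan is to argue by contradiction using the privacy condition. Suppose, for some fixed $W$, $S$, and some candidate $W^{*}\subset[K]$ with $|W^{*}|=D$, there is \emph{no} choice of $S^{*}\subseteq[K]\setminus W^{*}$ with $|S^{*}|=M$ for which $H(X_{W^{*}}\mid A^{[W,S]},Q^{[W,S]},X_{S^{*}})=0$. The goal is to show that the server, having observed the query $Q\triangleq Q^{[W,S]}$ (and knowing $X$ and hence $A\triangleq A^{[W,S]}$), could then rule out $W^{*}$ as a possible demand index set, which contradicts the privacy condition $\mathbb{P}(\boldsymbol{W}=W^{*}\mid Q,X)=1/\binom{K}{D}>0$.

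The key step is to connect the (in)ability to recover $X_{W^{*}}$ with the conditional distribution of $\boldsymbol{W}$ given $(Q,X)$. First I would note that, by the problem setup, any admissible protocol must satisfy the recoverability condition for the \emph{actual} demand/side-information pair: $H(X_{\boldsymbol{W}}\mid A^{[\boldsymbol{W},\boldsymbol{S}]},Q^{[\boldsymbol{W},\boldsymbol{S}]},X_{\boldsymbol{S}})=0$. Now consider the event $\{\boldsymbol{W}=W^{*}\}$ jointly with the observation $\{Q^{[\boldsymbol{W},\boldsymbol{S}]}=Q\}$. On this event, the actual side information index set $\boldsymbol{S}$ is some set disjoint from $W^{*}$ of size $M$ — call a generic such set $S^{*}$ — and recoverability forces $H(X_{W^{*}}\mid A,Q,X_{S^{*}})=0$ for that realized $S^{*}$. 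Hence if the joint event $\{\boldsymbol{W}=W^{*}, Q^{[\boldsymbol{W},\boldsymbol{S}]}=Q\}$ had positive probability, there would exist at least one $S^{*}\subseteq[K]\setminus W^{*}$, $|S^{*}|=M$, with $H(X_{W^{*}}\mid A,Q,X_{S^{*}})=0$ — contradicting our assumption. Therefore $\mathbb{P}(\boldsymbol{W}=W^{*},Q^{[\boldsymbol{W},\boldsymbol{S}]}=Q)=0$, which gives $\mathbb{P}(\boldsymbol{W}=W^{*}\mid Q^{[\boldsymbol{W},\boldsymbol{S}]}=Q)=0$, and a fortiori $\mathbb{P}(\boldsymbol{W}=W^{*}\mid Q,X)=0$ for the relevant realizations of $X$. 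Since the privacy condition requires this conditional probability to equal $1/\binom{K}{D}$, we reach the desired contradiction.

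The one point that needs care — and is the main obstacle — is making the informal phrase ``the server could rule out $W^{*}$'' precise at the level of events of positive probability versus almost-sure statements. The entropy equalities are statements about random variables, so one must be careful that ``$H(X_{W^{*}}\mid A,Q,X_{S^{*}})=0$'' for the realized triple is what recoverability actually delivers; the clean way is to sum recoverability over the realizations consistent with $\{\boldsymbol{W}=W^{*},\boldsymbol{S}=S^{*}\}$ and use that $Q$ is a (possibly stochastic) function of $(\boldsymbol{W},\boldsymbol{S},X_{\boldsymbol{S}})$ only, so that conditioning on $Q$ does not bring in any extra information about $X_{[K]\setminus\boldsymbol{S}}$ beyond what $A$ already encodes. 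I would also invoke that every demand/side-information pair has positive prior probability (the uniform distributions in the problem formulation), so any $W^{*}$ is \emph{a priori} possible and the privacy condition genuinely binds. With those observations in place, the contradiction closes and the lemma follows.
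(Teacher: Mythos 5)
Your proposal is correct and follows essentially the same route as the paper's proof: a contradiction argument showing that if no admissible $S^{*}$ permits recovery of $X_{W^{*}}$ from $A^{[W,S]}$, $Q^{[W,S]}$, and $X_{S^{*}}$, then the server can rule out $W^{*}$ as a demand index set, violating the privacy condition. The paper states this in two sentences, while you additionally spell out the probabilistic bookkeeping (positive prior probability of every $(W^{*},S^{*})$ pair and the reduction of the informal ``rule out'' step to $\mathbb{P}(\boldsymbol{W}=W^{*}\mid Q,X)=0$), which is a welcome but not conceptually different refinement.
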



\begin{proof}
The proof is by contradiction. Suppose that there does not exist any $S^{*}$ (and correspondingly, $X_{S^{*}}$) such that $X_{W^{*}}$ is recoverable from $A^{[W,S]}$, $Q^{[W,S]}$, and $X_{S^{*}}$. Then, the server knows that $W^{*}$ cannot be the user's demand index set, and this violates the privacy condition.
\end{proof}


The following lemma shows that given $Q$ and $X$ (and $A$), for $D>M$ the server must be able to find a candidate side information index set $S_{*}\subset [K]$, $|S_{*}|=M$ such that if the user's side information index set was $S_{*}$, then the user could recover all messages $X_{[K]\setminus S_{*}}$ from $Q$, $A$, and $X_{S_{*}}$.

\begin{lemma}\label{lem:B1}
For any $W\subset [K]$, $|W|=D>M$, and ${S\subseteq [K]\setminus W}$, $|S|=M$, there exists $S_{*}\subset [K]$, $|S_{*}|=M$ such that \[H(X_{[K]\setminus S_{*}}|A^{[W,S]},Q^{[W,S]},X_{S_{*}})=0.\] 
\end{lemma}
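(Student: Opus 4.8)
The plan is to recast recoverability as a closure operator and then argue by an extremal/iterative construction. For $T\subseteq[K]$, let $\mathcal{R}(T):=\{\,i\in[K]: H(X_i\mid A^{[W,S]},Q^{[W,S]},X_T)=0\,\}$ denote the set of indices the user could recover from the server's query--answer pair together with side information $X_T$. I would first record the routine facts that $\mathcal{R}$ is extensive ($T\subseteq\mathcal{R}(T)$), monotone ($U\subseteq V\Rightarrow\mathcal{R}(U)\subseteq\mathcal{R}(V)$) and idempotent ($\mathcal{R}(\mathcal{R}(T))=\mathcal{R}(T)$), so in particular $U\subseteq\mathcal{R}(V)$ implies $\mathcal{R}(U)\subseteq\mathcal{R}(V)$; and that $V\subseteq\mathcal{R}(U)$ implies $\mathcal{R}(U\cup V)=\mathcal{R}(U)$ (adjoining already-recoverable messages to the side information is free). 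In this language the recoverability condition reads $W\cup S\subseteq\mathcal{R}(S)$, hence $|\mathcal{R}(S)|\ge D+M$, and Lemma~\ref{prop:1} says: for every $D$-subset $W^{*}\subseteq[K]$ there is an $M$-subset $S^{*}\subseteq[K]\setminus W^{*}$ with $W^{*}\subseteq\mathcal{R}(S^{*})$. What must be produced is an $M$-subset $S_{*}$ with $\mathcal{R}(S_{*})=[K]$.

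The engine of the proof is a ``saturation'' step that becomes available once a current $M$-seed $T$ leaves at most $D$ messages unrecoverable, i.e.\ $|[K]\setminus\mathcal{R}(T)|\le D$. I would pick a $D$-subset $W^{*}$ with $[K]\setminus\mathcal{R}(T)\subseteq W^{*}$, padding the remaining slots with elements of $\mathcal{R}(T)\setminus T$ (there are $|\mathcal{R}(T)|-M$ of them, and using $K\ge D+M$ the count works out), apply Lemma~\ref{prop:1} to get $S^{*}\subseteq[K]\setminus W^{*}\subseteq\mathcal{R}(T)$ with $W^{*}\subseteq\mathcal{R}(S^{*})$, and then invoke idempotence: $\mathcal{R}(S^{*})\subseteq\mathcal{R}(\mathcal{R}(T))=\mathcal{R}(T)$, so $W^{*}\subseteq\mathcal{R}(T)$, which together with $[K]\setminus\mathcal{R}(T)\subseteq W^{*}$ forces $\mathcal{R}(T)=[K]$. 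Since $|\mathcal{R}(S)|\ge D+M$, this already settles the case $K\le 2D+M$ with $S_{*}=S$ (and in fact yields $\mathcal{R}(S)=[K]$ outright, the form needed for Lemma~\ref{lem:B2}).

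For $K>2D+M$ the plan is to iteratively grow the recoverable set until the saturation step applies. Starting from $T=S$, as long as $\mathcal{R}(T)\ne[K]$ pick $j\notin\mathcal{R}(T)$ and apply Lemma~\ref{prop:1} to $W^{*}=\{j\}\cup B$, where $B$ is any $(D-1)$-subset of $\mathcal{R}(T)\setminus T$. The returned $S^{*}$ has $j\in\mathcal{R}(S^{*})$, so $S^{*}\not\subseteq\mathcal{R}(T)$, and writing $N:=S^{*}\setminus\mathcal{R}(T)\ne\emptyset$ one gets, via the redundancy fact, $\mathcal{R}(T\cup N)=\mathcal{R}(T\cup S^{*})\supseteq\mathcal{R}(T)\cup\{j\}$, a strictly larger recoverable set (one may instead take $W^{*}$ to be an entire $D$-subset of $[K]\setminus\mathcal{R}(T)$, which exposes up to $D$ unrecoverable indices per probe). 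Repeating drives $|\mathcal{R}(\cdot)|$ upward and eventually into the saturation regime.

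The main obstacle — and the step I expect to need the most care — is holding the seed at size exactly $M$ along this iteration: a single enlargement step grows the seed by $|N|$, which a priori can be as large as $M$, so naively the iterate drifts above size $M$. The hypothesis $D>M$ is precisely what should be exploited here: each probe can be arranged to expose several currently-unrecoverable indices at once (up to $D$ of them), outpacing the at-most-$M$ growth of the seed, and one then wants to re-extract at every stage an $M$-subset whose recoverable set has maximum size and to show that this maximum strictly increases until it equals $K$. Formalizing this monovariant so that the surplus indices $N$ cannot accumulate is the delicate part; the case $K\le 2D+M$ above is the clean prototype where no iteration, and hence no such bookkeeping, is required.
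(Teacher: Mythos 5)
Your closure-operator framework and the ``saturation'' step are sound: if an $M$-seed $T$ satisfies $\bigl|[K]\setminus\mathcal{R}(T)\bigr|\le D$, then probing Lemma~\ref{prop:1} with a $D$-set $W^{*}\supseteq [K]\setminus\mathcal{R}(T)$ indeed returns $S^{*}\subseteq [K]\setminus W^{*}\subseteq\mathcal{R}(T)$, and idempotence forces $\mathcal{R}(T)=[K]$; this correctly disposes of $K\le 2D+M$ with $S_{*}=S$. But for $K>2D+M$ your proof is not complete, and you say so yourself: the enlargement step replaces the seed $T$ by $T\cup N$ with $N=S^{*}\setminus\mathcal{R}(T)$ possibly of size up to $M$, and the task of ``re-extracting an $M$-subset whose recoverable set strictly increases'' is exactly the content of the lemma, not a routine bookkeeping matter. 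Moreover the patch you sketch (probe with a $D$-subset of $[K]\setminus\mathcal{R}(T)$ so as to expose up to $D$ new indices and ``outpace'' the seed growth) does not yield a monovariant: the seed $S^{*}$ returned by Lemma~\ref{prop:1} only guarantees $W^{*}\cup S^{*}\subseteq\mathcal{R}(S^{*})$, i.e.\ $|\mathcal{R}(S^{*})|\ge D+M$, which need not exceed the current $|\mathcal{R}(T)|$; nothing ties $\mathcal{R}(S^{*})$ to the progress already accumulated in $\mathcal{R}(T)$, so the maximum of $|\mathcal{R}(\cdot)|$ over $M$-seeds has no reason to increase under your probes.

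The missing idea, which is the heart of the paper's proof, is to use $D>M$ not to ``outpace'' the seed but to \emph{embed the current seed inside the probe}. Since $|T|+1=M+1\le D$, you may apply Lemma~\ref{prop:1} to a candidate demand set $W^{*}=T\cup\{i\}\cup W_{0}$, where $i$ is a fresh index and $W_{0}$ is arbitrary padding of size $D-M-1$. The returned set $S^{*}$ has size exactly $M$, is disjoint from $W^{*}$, and satisfies $T\subseteq\mathcal{R}(S^{*})$; hence by your own monotonicity/idempotence facts $\mathcal{R}(S^{*})\supseteq\mathcal{R}(T)\cup\{i\}\cup W_{0}$, so the recoverable set strictly grows while the seed size stays pinned at $M$ (in the paper this is phrased as passing from the pair $(W_{1},S_{1})$ to $W_{2}=(W_{1}\setminus S_{2})\cup(S_{1}\cup\{i\}\cup W_{0})$ with $|W_{2}|\ge|W_{1}|+1$, the recoverability of $X_{W_{2}}$ from $X_{S_{2}}$ following from the chain rule because $X_{S_{1}}$, and therefore $X_{W_{1}}$, is recovered first). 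Iterating this step until $|\mathcal{R}(\cdot)|=K-M$ messages outside the seed are recovered completes the proof. Without this re-seeding device your argument establishes the lemma only in the regime $K\le 2D+M$.
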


Before proving Lemma~\ref{lem:B1}, we present a toy example to give more intuition about the result of the lemma. 




\begin{example}\label{ex:I1}
Consider a scenario where ${K=5}$, ${D=2}$, and ${M=1}$. Suppose that the server has the messages ${X_1,\dots,X_5}$, and the user knows one message, say $X_1$, and demands two other messages, say ${X_2,X_3}$. That is, ${W=\{2,3\}}$ and ${S=\{1\}}$. 

By the result of Lemma~\ref{prop:1}, for any two messages $X_{i_1},X_{i_2}$, the server must be able to find another message $X_{i}$, ${i\not\in \{i_1,i_2\}}$, as a potential side information, i.e., $X_{i_1},X_{i_2}$ are recoverable from $Q\triangleq Q^{[\{2,3\},\{1\}]}$, $A\triangleq A^{[\{2,3\},\{1\}]}$, and $X_i$. Define the notation \[\{i\}\rightarrow \{i_1,\dots,i_n\}\] to mean that $X_{i_1},\dots,X_{i_{n}}$ are recoverable from $Q$, $A$, and $X_i$. 

By the recoverability condition, ${\{1\}\rightarrow \{2,3\}}$, i.e., $X_2,X_3$ must be recoverable from $Q$, $A$, and $X_1$. 

Now, consider the candidate demand index set $\{1,4\}$. There are two cases: 
\begin{itemize}
\item[(i)] ${\{5\}\rightarrow \{1,4\}}$. Since we also have $\{1\}\rightarrow \{2,3\}$, then ${\{5\}\rightarrow \{1,2,3,4\}}$.  
\item[(ii)]  ${\{2\}\rightarrow \{1,4\}}$ (or ${\{3\}\rightarrow \{1,4\}}$).
Similarly, since we also have ${\{1\}\rightarrow \{2,3\}}$, then ${\{2\}\rightarrow \{1,3,4\}}$. Now, consider the candidate demand index set $\{4,5\}$. Either ${\{1\}\rightarrow \{4,5\}}$, or ${\{2\}\rightarrow \{4,5\}}$, or ${\{3\}\rightarrow \{4,5\}}$. In all of these cases, ${\{2\}\rightarrow \{1,3,4,5\}}$. 
\end{itemize}

Taking $S_{*}=\{5\}$ or $S_{*}=\{2\}$ in case (i) or case (ii), respectively, it follows that the result of Lemma~\ref{lem:B1} holds for ${K=5}$, ${D=2}$, and ${M=1}$. 
\end{example}

Next, we prove Lemma~\ref{lem:B1} by extending the reasoning in Example~\ref{ex:I1} to arbitrary $K$, $D$, and $M$ such that ${D>M}$.

\begin{proof}[Proof of Lemma~\ref{lem:B1}]
It suffices to show that there exists ${W_{*}\subset [K]}$, ${|W_{*}|=K-M}$ such that all the messages in ${X_{W_{*}}}$ are recoverable from $Q\triangleq Q^{[W,S]}$, $A \triangleq A^{[W,S]}$, and $X_{[K]\setminus W_{*}}$. Taking ${S_{*}=[K]\setminus W_{*}}$, the proof will be complete. 







The idea is to start with a set of $D$ messages that can be recovered from $M$ ($<D$) other messages (along with $Q$ and $A$), and grow this set in size (up to $K-M$) recursively. 







Let ${W_1\subset [K]}$, ${D\leq |W_1|<K-M}$ and ${S_1\subseteq [K]\setminus W_1}$, ${|S_1|=M}$ be such that \[{H(X_{W_1}|A,Q,X_{S_1})=0}.\] (Such $W_1$ and $S_1$ exist; in particular, $W_1=W$ and $S_1=S$ satisfy these requirements (by the recoverability condition).) We will show that there exist ${W_2\subset [K]}$, ${|W_2|>|W_1|}$ and ${S_2\subseteq [K]\setminus W_2}$, ${|S_2|=M}$ such that \[{H(X_{W_2}|A,Q,X_{S_2})=0}.\] That is, starting with any arbitrary set of $|W_1|$ (${<K-M}$) messages $X_{W_1}$ which are recoverable from $M$ other messages $X_{S_1}$, one can always find a set of $|W_2|$ ($>|W_1|$) messages $X_{W_2}$ which can be recovered from a set of $M$ other messages $X_{S_2}$. The proof is as follows.  

Take an arbitrary ${i \in [K] \setminus (W_1 \cup S_1)}$. (Such an index $i$ exists because ${|W_1\cup S_1|<K}$.) Note that ${|S_1\cup \{i\}|=M+1\leq D}$ since $D>M$. Take an arbitrary $W_0\subseteq [K]\setminus (S_1\cup \{i\})$, $|W_0|=D-M-1$. Note that $|W_0|\geq 0$. By the result of Lemma~\ref{prop:1}, there must exist ${S_2\subseteq [K]\setminus (S_1\cup \{i\}\cup W_0)}$, ${|S_2|=M}$ such that \[{H(X_{S_1},X_i,X_{W_0}|A,Q,X_{S_2})=0}.\] Let ${W_2 = (W_1 \setminus S_2) \cup (S_1 \cup \{i\}\cup W_0)}$. It remains to show that $W_2$ and $S_2$ satisfy the following requirements: (i) ${|W_2|>|W_1|}$; (ii) ${S_2\subseteq [K]\setminus W_2}$, and (iii) ${H(X_{W_2}|A,Q,X_{S_2})=0}$. It is easy to see that \[{|W_2|= |W_1|-|S_2|+|S_1|+1+|W_0|-|W_1\cap W_0|},\] and subsequently, ${|W_2|\geq |W_1|+1>|W_1|}$. It is also easy to see that ${S_2\subseteq [K]\setminus W_2}$ because ${S_2\cap (W_1\setminus S_2) = \emptyset}$ and ${S_2\cap (S_1\cup \{i\}\cup W_0)=\emptyset}$. Moreover,
\begin{align*}
& H(X_{W_2}|A,Q,X_{S_2}) \\ & \quad = H(X_{S_1},X_i,X_{W_0}|A,Q,X_{S_2}) \\ 
&\quad\quad +H(X_{W_1 \setminus S_2}|A,Q,X_{S_2},X_{S_1},X_i,X_{W_0})\\ 
&\quad =0.	
\end{align*} Thus, $W_2$ and $S_2$ satisfy the requirements~(i)-(iii). 

By repeating the above arguments recursively, it follows that there exists ${W_{*}\subset [K]}$, ${|W_{*}|= K-M}$ such that ${H(X_{W_{*}}|A,Q,X_{[K]\setminus W_{*}})=0}$, as desired. 
\end{proof}



Note that Lemma~\ref{lem:B1} guarantees that for $D>M$ there exists a ``potential'' side information index set $S_{*}$ such that if the user's side information index set was $S_{*}$, then the user could recover all messages $X_{[K]\setminus S_{*}}$ from $Q$, $A$, and $X_{S_{*}}$. However, it is not obvious that the user's ``actual'' side information index set $S$ is one such set $S_{*}$. In the following lemma, we show that this must be the case for ${D>M}$. 

\begin{lemma}\label{lem:B2}
For any $W\subset [K]$, $|W|=D$ ($>M$), and ${S\subseteq [K]\setminus W}$, $|S|=M$, \[H(X_{[K]\setminus S}|A^{[W,S]},Q^{[W,S]},X_{S})=0.\]
\end{lemma}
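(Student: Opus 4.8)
The plan is to leverage Lemma~\ref{lem:B1} together with the privacy condition to show that the user's actual side information set $S$ must itself be one of the ``good'' sets $S_{*}$ from which the entire complement $X_{[K]\setminus S_{*}}$ is recoverable. The starting observation is that by Lemma~\ref{lem:B1} there is at least one set $S_{*}$ with $|S_{*}|=M$ such that $H(X_{[K]\setminus S_{*}}|A,Q,X_{S_{*}})=0$, i.e., given $A$ and $Q$, knowing the $M$ messages indexed by $S_{*}$ suffices to decode all $K$ messages. I would first argue that the event ``$X_{S_{*}}$ together with $(A,Q)$ determines all of $X$'' is in fact a deterministic property of the pair $(A,Q)$ (together with the protocol), since the relevant conditional entropy is zero; hence, after the server sees $(Q,A)$, it can compute the collection $\mathcal{S}$ of \emph{all} sets $\tilde S$ of size $M$ that have this decoding property, and this collection is nonempty.

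The crux is then to show that the actual $S$ lies in $\mathcal{S}$. Suppose for contradiction that $S\notin\mathcal{S}$, i.e., $H(X_{[K]\setminus S}|A,Q,X_S)>0$. I would use Lemma~\ref{prop:1} in the contrapositive direction: for the server's posterior over $\boldsymbol{W}$ to remain uniform, every candidate $W^{*}$ must admit some compatible $S^{*}$. The idea is to pick a candidate demand set $W^{*}$ that is forced to use side information inside the fixed ``good'' set $S_{*}$ — for instance taking $W^{*}\subseteq [K]\setminus S_{*}$ with $|W^{*}|=D$, which is possible since $K-M\ge D$. For such a $W^{*}$, combined with $S^{*}=S_{*}$ (using $H(X_{[K]\setminus S_{*}}|A,Q,X_{S_{*}})=0$), recoverability holds; the point is that the set of $W^{*}$ that are ``consistent'' with $(Q,A)$ — meaning there exists a valid $S^{*}$ — must, by the privacy condition, be \emph{all} $\binom{K}{D}$ of them, and moreover each consistent $(W^{*},S^{*})$ pair must occur with the correct conditional probability. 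The real leverage comes from examining the true pair $(W,S)$: since the query $Q$ is generated from $(W,S,X_S)$ and the server's posterior on $\boldsymbol{W}$ is uniform given $(Q,X)$, and since Lemma~\ref{lem:B1} already pins the ``information content'' of $(A,Q)$ at $(K-M)L$ worth of messages, I would show that if $S$ failed to be a decoding set then the answer $A$ would have to contain \emph{strictly more} than $(K-M)L$ bits of information about $X$ in order to simultaneously satisfy recoverability for the true $(W,S)$ and for the alternative $W^{*}\subseteq[K]\setminus S_{*}$ — contradicting that we can take $H(A)$ arbitrarily close to $(K-M)L$, or more cleanly, contradicting $H(A|Q,X_{S_{*}})\le H(X_{[K]\setminus S_{*}})=(K-M)L$ from the argument already used in Lemma~\ref{lem:1}'s sketch.

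Concretely, the cleanest route I would attempt: from Lemma~\ref{lem:B1} fix $S_{*}$ with $H(X|A,Q,X_{S_{*}})=0$ (equivalently $H(X_{[K]\setminus S_{*}}|A,Q,X_{S_{*}})=0$). Then $H(A|Q,X_{S_{*}})=H(X_{[K]\setminus S_{*}}|Q,X_{S_{*}})=(K-M)L$ exactly as in the sketch, so $H(A|Q)\le H(A|Q,X_{S_{*}})+H(X_{S_{*}}|Q)\le (K-M)L + ML = KL$, but also $H(A|Q)\ge H(A|Q,X_S)=H(X_{[K]\setminus S}|Q,X_S)=(K-M)L$. Now here is the key inequality: $H(A|Q)=H(A|Q,X_S)+I(A;X_S|Q)$, and $H(A|Q)=H(A|Q,X_{S_{*}})+I(A;X_{S_{*}}|Q)=(K-M)L+I(A;X_{S_{*}}|Q)$. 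I would show $I(A;X_S|Q)=0$ — because $A$ is a function of $(Q,X)$ while $Q$ is generated independently of $X_{[K]\setminus S}$; actually since $A$ is determined by $X$ and $Q$, and $X_S$ is part of $X$, this needs care, so instead I would directly target the chain $H(A|Q,X_S)\le H(X|Q,X_S)$ never mind — the tighter argument is: $(K-M)L = H(A|Q,X_S)\le H(A|Q)$ and $H(A|Q) = H(A,X_{[K]\setminus S_{*}}|Q) - H(X_{[K]\setminus S_{*}}|A,Q) = H(A,X_{[K]\setminus S_{*}}|Q) \ge H(X_{[K]\setminus S_{*}}|Q)=(K-M)L$ with equality iff $H(A|Q,X_{[K]\setminus S_{*}})=0$, i.e.\ $A$ is a function of $(Q,X_{[K]\setminus S_{*}})$; symmetrically equality considerations with $S$ in place of $S_{*}$ force $H(X_{[K]\setminus S}|A,Q,X_S)=0$. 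I expect the main obstacle to be making this equality-condition bookkeeping fully rigorous: one must ensure that the degrees of freedom counted via $S_{*}$ and via $S$ genuinely coincide, which is where the privacy condition (via Lemma~\ref{prop:1}) must be invoked to rule out the possibility that $A$ encodes a ``different'' $(K-M)L$ symbols incompatible with $S$ being a decoding set — i.e.\ translating ``the posterior on $\boldsymbol W$ is uniform'' into ``$S$ itself works'' is the delicate step, and I would handle it by applying Lemma~\ref{prop:1} to a carefully chosen $W^{*}$ (one containing indices from $[K]\setminus S$) and chasing the implied recoverability back to $S$.
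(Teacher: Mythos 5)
Your opening moves match the paper's proof in spirit: you form the family $\mathcal{S}$ of all size-$M$ sets $S_{*}$ with $H(X_{[K]\setminus S_{*}}|A,Q,X_{S_{*}})=0$, note it is nonempty by Lemma~\ref{lem:B1} and computable by the server from $(Q,A,X)$, and assume $S\notin\mathcal{S}$ for contradiction. But the mechanism you then use to force the contradiction does not work. In your ``cleanest route'' entropy chain: (i) you assert $(K-M)L=H(A|Q,X_S)$ at the outset, yet since $H(A|Q,X)=0$ and $X_{[K]\setminus S}$ is independent of $(Q,X_S)$, the identity $H(A|Q,X_S)=(K-M)L$ is \emph{equivalent} to the statement being proved, so using it is circular; (ii) the step $H(A|Q)=H(A,X_{[K]\setminus S_{*}}|Q)$ silently requires $H(X_{[K]\setminus S_{*}}|A,Q)=0$, whereas Lemma~\ref{lem:B1} only gives recoverability \emph{given} $X_{S_{*}}$ as well; (iii) the claim $H(X_{[K]\setminus S_{*}}|Q,X_{S_{*}})=(K-M)L$ is unjustified for $S_{*}\neq S$, because $Q$ is allowed to depend on $X_S$ and hence on $X_{S\setminus S_{*}}$, which sits inside $X_{[K]\setminus S_{*}}$ (the independence argument in the sketch of Lemma~\ref{lem:1} is valid only for the actual $S$); and (iv) even granting $H(A|Q)=(K-M)L$, the ``symmetric equality consideration'' with $S$ only yields $I(A;X_S|Q)=0$, which does not imply $H(X_{[K]\setminus S}|A,Q,X_S)=0$. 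So the entropy bookkeeping cannot close the argument.

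The step you yourself flag as delicate --- turning uniformity of the posterior on $\boldsymbol{W}$ into ``$S$ itself is a decoding set'' --- is exactly where the paper does the real work, and it is combinatorial rather than entropy counting. Under the hypothesis $S\notin\mathcal{S}$, the server, who knows $\mathcal{S}$ and knows the protocol satisfies privacy, can exclude every member of $\mathcal{S}$ from the candidate side-information index sets; the paper then \emph{re-runs the recursive growth argument of Lemma~\ref{lem:B1}}, invoking Lemma~\ref{prop:1} but restricted so that the potential side-information sets $S^{*}$ produced at each step avoid $\mathcal{S}$. The recursion terminates with a set $S_0\notin\mathcal{S}$, $|S_0|=M$, satisfying $H(X_{[K]\setminus S_0}|A,Q,X_{S_0})=0$, which contradicts the fact that $\mathcal{S}$ contains \emph{all} such sets. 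Your proposal gestures at ``applying Lemma~\ref{prop:1} to a carefully chosen $W^{*}$ and chasing the implied recoverability back to $S$,'' but never supplies this exclusion-and-regrow construction; that missing construction is the heart of the proof, so the gap is genuine.
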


The following example explains the result of Lemma~\ref{lem:B2} for the scenario of Example~\ref{ex:I1}.


\begin{example}\label{ex:I2}
Consider the scenario of Example~\ref{ex:I1}. Recall that in case~(i), $\{1\}\rightarrow \{2,3\}$ and $\{5\}\rightarrow\{1,4\}$, and in case~(ii), $\{1\}\rightarrow\{2,3\}$, $\{2\}\rightarrow \{1,4\}$, and $\{1\}\rightarrow \{4,5\}$. 

First, consider the case~(i). Consider the candidate demand index set $\{4,5\}$. Either $\{1\}\rightarrow \{4,5\}$, or $\{2\}\rightarrow \{4,5\}$, or $\{3\}\rightarrow \{4,5\}$. In either case, $\{1\}\rightarrow \{2,3,4,5\}$ since $\{1\}\rightarrow \{2,3\}$. 

Next, consider the case~(ii). Since $\{1\}\rightarrow \{2,3\}$ and $\{1\}\rightarrow \{4,5\}$, then $\{1\}\rightarrow \{2,3,4,5\}$. 

Thus, in both cases (i) and (ii), all the messages $X_2,\dots,X_5$ can be recovered from $Q$, $A$, and $X_1$. Noting that $X_1$ is the user's actual side information, it follows that the result of Lemma~\ref{lem:B2} holds for ${K=5}$, ${D=2}$, and ${M=1}$. 
\end{example}


Lemma~\ref{lem:B2} generalizes the result of Example~\ref{ex:I2} for arbitrary $K$, $D$, and $M$ such that $D>M$, and we prove this lemma by way of contradiction as follows.


\begin{proof}[Proof of Lemma~\ref{lem:B2}]
Fix arbitrary $W$ and $S$. Let $Q \triangleq Q^{[W,S]}$ and $A \triangleq A^{[W,S]}$. Let $\mathcal{S}$ be the collection of all ${S_{*}\subset [K]}$, ${|S_{*}|=M}$ such that ${H(X_{[K]\setminus S_{*}}|A,Q,X_{S_{*}})=0}$. Since $A$, $Q$, and $X$ are available at the server, then $\mathcal{S}$ is known to the server. From the perspective of the server, there are two possibilities: the user can recover all messages in $X$ from $Q$ and $A$, i.e., $H(X_{S_{*}}|A,Q,X_S)=0$ for some $S_{*}\in \mathcal{S}$, or given $Q$ and $A$, the user cannot recover all messages in $X$, i.e., ${H(X_{S_{*}}|A,Q,X_S)> 0}$ for all $S_{*}\in \mathcal{S}$. 

First, suppose that $H(X_{S_{*}}|A,Q,X_S)=0$ for some $S_{*}\in \mathcal{S}$. Then, $S$ must belong to $\mathcal{S}$, i.e., $H(X_{[K]\setminus S}|A,Q,X_{S_{*}})=0$, as was to be shown. Next, suppose that $H(X_{S_{*}}|A,Q,X_S)> 0$ for all $S_{*}\in \mathcal{S}$. That is, $S\not\in \mathcal{S}$. In this case, we will show a contradiction. 
Since $S$ is the user's (actual) side information index set and $S\not\in \mathcal{S}$, from the server's perspective the user's side information index set cannot belong to $\mathcal{S}$. Thus, from the perspective of the server, none of $S_{*}\in \mathcal{S}$ can be the user's side information index set. 
That is, for a given subset of $D$ indices, say $W^{*}$, if the server can only pair $W^{*}$ with some candidate side information index set(s) in $\mathcal{S}$ (given $Q$ and $A$), then $W^{*}$ cannot be the user's (actual) demand index set. Since this violates the privacy condition and the server knows that $Q$ and $A$ satisfy the privacy condition, then the server simply rules out any $S_{*}\in \mathcal{S}$ from the set of candidate side information index sets for any possible demand index set. Repeating the same lines as in the proof of Lemma~\ref{lem:B2} except when ruling out all $S_{*}\in \mathcal{S}$ from the set of candidate side information index sets, it follows that there must exist ${S_0\not\in \mathcal{S}}$, ${S_0\subset [K]}$, ${|S_0|=M}$ such that ${H(X_{[K]\setminus S_0}|A,Q,X_{S_0})=0}$. Since $\mathcal{S}$ contains all ${S_{*}\subset [K]}$, ${|S_{*}|=M}$ such that ${H(X_{[K]\setminus S_{*}}|A,Q,X_{S_{*}})=0}$, then $\mathcal{S}$ must contain $S_0$. i.e., ${S_0\in \mathcal{S}}$. This is a contradiction. 
\end{proof}

\subsection{Proof of Achievability for Theorem~\ref{thm:1}}


In this section, we propose a PIR-SI protocol for arbitrary $K$, $M$, and $D$. We notice that this protocol is applicable to both cases of $D>M$ and $D\leq M$. As will be shown shortly, the rate achieved by this protocol is optimal for $D>M$, whereas for $D\leq M$ the protocol proposed in Section~\ref{sec:PIRSI-II} may achieve a higher rate, depending on $K$, $D$, and $M$. 

Assume that $q\geq K$, and let $\omega_1,\dots,\omega_{K}$ be $K$ distinct elements from $\mathbb{F}_q$. 

\textit{\bf Generalized Reed-Solomon (GRS) Code Protocol:} This protocol consists of three steps as follows: 

\textbf{\it Step 1:} The user constructs $K-M$ sequences $Q_{1},\dots,Q_{K-M}$, each of length $K$, such that $Q_{i}  = \{\omega^{i-1}_1,\dots,\omega^{i-1}_{K}\}$ for $i\in [K-M]$, and sends to the server the query $Q^{[W,S]} = \{Q_{1},\dots,Q_{K-M}\}$. 

Note that for any ${i\in [K-M]}$ and ${j\in [K]}$, the $j$th element in the sequence $Q_i$ can be thought of as the entry $(i,j)$ in a ${(K-M)\times K}$ Vandermonde matrix $V$ with distinct parameters ${\omega_1,\dots,\omega_{K}}$. It is also well-known that the matrix $V$ is full-rank, and any ${(K-M)\times (K-M)}$ sub-matrix of $V$ is full-rank~\cite{Roth:06}. 


\textbf{\it Step 2:} By using $Q_i$, the server computes $A_i = \sum_{j=1}^{K} \omega^{i-1}_{j} X_{j}$ for $i\in [K-M]$, and sends to the user the answer $A^{[W,S]}=\{A_{1},\dots,A_{K-M}\}$. In the language of coding theory, the matrix $V$ can be viewed as the parity-check matrix of a $(K,M)$ (normalized) Generalized Reed-Solomon (GRS) code~\cite{Roth:06}, and accordingly, $A_1,\dots,A_{K-M}$ are the parity check equations of this code (and hence the name of the proposed protocol).

\textbf{\it Step 3:} Upon receiving the answer from the server, the user retrieves $X_j$ for each $j\in [K]\setminus S$ by subtracting off the contribution of side information $\{X_i\}_{i\in S}$ from the $K-M$ equations $A_{1},\dots,A_{K-M}$, and solving the resulting system of $K-M$ linear equations with $K-M$ unknowns.
 
{\textbf{Remark 5.}} It should be noted that the GRS Code protocol is similar to the MDS Code scheme in~\cite{Kadhe2017}. The main benefit of the GRS Code protocol is that it can operate over any field of size $q \geq K$, whereas the MDS Code scheme of~\cite{Kadhe2017} requires a field size $q\geq 2K-M$ ($>K$).\footnote{In the MDS Code scheme of~\cite{Kadhe2017}, the user queries the server to send $K-M$ parity symbols of a systematic $(2K-M,K)$ MDS code; whereas the coded symbols sent by the server in the GRS Code protocol may not correspond to the parity symbols of a systematic $(2K-M,K)$ MDS code. This is because for a systematic MDS code, the parity part of a generator matrix needs to be super-regular~\cite[Proposition 11.4]{Roth:06}. This is while a Vandermonde matrix over a finite field is not guaranteed to be super-regular.}

\begin{lemma}\label{lem:2}
The GRS Code protocol is a PIR-SI protocol, and achieves the rate $D/(K-M)$. 
\end{lemma}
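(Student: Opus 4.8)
The plan is to verify the three defining requirements of a PIR-SI protocol (recoverability, $W$-privacy, and that the query respects the side-information constraint) and then to compute the rate, using only the cited properties of Vandermonde matrices.

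\emph{Recoverability.} Given the answer $A^{[W,S]}=\{A_1,\dots,A_{K-M}\}$ with $A_i=\sum_{j=1}^{K}\omega_j^{i-1}X_j$ together with the side information $X_S$ ($|S|=M$), the user subtracts the known terms $\sum_{j\in S}\omega_j^{i-1}X_j$ from each $A_i$, obtaining $K-M$ linear equations over $\mathbb{F}_{q^m}$ in the $K-M$ unknowns $\{X_j\}_{j\in[K]\setminus S}$. The coefficient matrix is the $(K-M)\times(K-M)$ submatrix of the Vandermonde matrix $V$ formed by the columns indexed by $[K]\setminus S$; by the cited fact that every $(K-M)\times(K-M)$ submatrix of $V$ is nonsingular, this system is uniquely solvable, so the user recovers all of $X_{[K]\setminus S}$, and in particular $X_W$ since $W\subseteq[K]\setminus S$. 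Hence $H(X_W\mid A^{[W,S]},Q^{[W,S]},X_S)=0$. Moreover each $A_i$ is a deterministic function of $Q^{[W,S]}$ and $X$, so $H(A^{[W,S]}\mid Q^{[W,S]},X)=0$, and $Q^{[W,S]}$ is (trivially) independent of $X_{[K]\setminus S}$.

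\emph{Privacy.} The single observation that does all the work is that $Q^{[W,S]}=\{Q_1,\dots,Q_{K-M}\}$ does not depend on $W$ or $S$ at all: each $Q_i=\{\omega_1^{i-1},\dots,\omega_K^{i-1}\}$ is fixed once $K$, $M$, and the evaluation points are fixed (the assumption $q\ge K$ being exactly what makes $K$ distinct $\omega_1,\dots,\omega_K\in\mathbb{F}_q$ available). Therefore conditioning on $Q^{[W,S]}$ is vacuous, and since $\boldsymbol{W}$ is independent of $X$, we get $\mathbb{P}(\boldsymbol{W}=W^{*}\mid Q^{[W,S]},X)=\mathbb{P}(\boldsymbol{W}=W^{*})=1/\binom{K}{D}$ for every $W^{*}$, which is precisely the privacy condition. (The same argument in fact gives $(W,S)$-privacy, a point invoked later in the paper.)

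\emph{Rate.} The answer is a vector of $K-M$ symbols of $\mathbb{F}_{q^m}$. Since the linear map $X\mapsto(A_1,\dots,A_{K-M})$ has matrix $V$ of rank $K-M$ and the $X_j$ are i.i.d.\ uniform on $\mathbb{F}_{q^m}$, the vector $(A_1,\dots,A_{K-M})$ is uniform on $\mathbb{F}_{q^m}^{K-M}$, so $H(A^{[W,S]})=(K-M)L$ for every $W,S$; hence $H(A^{[\boldsymbol{W},\boldsymbol{S}]})=(K-M)L$ and the rate is $DL/((K-M)L)=D/(K-M)$. I do not anticipate any genuine obstacle here: the only external ingredient is the nonsingularity of square Vandermonde submatrices, already cited from \cite{Roth:06}, and everything else is routine entropy bookkeeping (independence of $\boldsymbol{W}$ and $X$, uniformity of the coded symbols).
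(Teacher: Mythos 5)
Your proposal is correct and follows essentially the same route as the paper's proof: recoverability via the invertibility of the $(K-M)\times(K-M)$ Vandermonde submatrix after subtracting the side information, privacy from the fact that the query is identical for every $(W,S)$, and the rate from the uniformity (hence entropy $(K-M)L$) of the answer symbols. The only cosmetic difference is that you phrase the entropy computation via the rank of the map $X\mapsto(A_1,\dots,A_{K-M})$ rather than the linear independence of the $A_i$'s, which is the same fact.
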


\begin{proof}
Recall that $H(X_1)=\dots=H(X_K) = L$. Since $A_1,\dots,A_{K-M}$ are linearly independent combinations of $X_1,\dots,X_K$, which are themselves independently and uniformly distributed over $\mathbb{F}_{q^{m}}$, then $A_1,\dots,A_{K-M}$ are independently and uniformly distributed over $\mathbb{F}_{q^{m}}$. (The linear independence follows from our choice of the coefficients of $X_j$'s in the linear combinations $A_i$'s, and the full-rank property of the Vandermonde matrix associated with these coefficients.) Then, $H(A_1)=\dots=H(A_{K-M})=L$, and $H(A^{[W,S]}) = H(A_1,\dots,A_{K-M})=\sum_{i=1}^{K-M} H(A_i)=(K-M)L$ for any $W\subset [K]$, $|W|=D$ and any $S\subseteq [K]\setminus W$, $|S|=M$. Thus, the rate of the GRS Code protocol is equal to $DL/H(A^{[\boldsymbol{W},\boldsymbol{S}]}) = DL/H(A^{[W,S]}) = D/(K-M)$, noting that $H(A^{[\boldsymbol{W},\boldsymbol{S}]})=H(A^{[W,S]})$ by the uniformity of joint distribution of $\boldsymbol{W}$ and $\boldsymbol{S}$. 

Next, we prove that the GRS Code protocol is a PIR-SI protocol. Once the user subtracts off the contribution of side information $X_S$ from the $K-M$ linear equations in the answer, the coefficient matrix associated with the resulting system of linear equations is a $(K-M)\times (K-M)$ sub-matrix of a $(K-M)\times K$ Vandermonde matrix. Since this sub-matrix is invertible, the user can uniquely solve the underlying system of linear equations, and recover all $K-M$ messages in $X_{[K]\setminus S}$, including the demand $X_W$. Thus the recoverability condition is satisfied. 

The privacy condition is also satisfied, simply because given $D$ and $M$, the user sends exactly the same query for any demand index set $W$ of size $D$, and any side information index set $S$ of size $M$. Thus, the server does not gain any knowledge about the realization of $W$ (and $S$).
\end{proof}


\section{PIR-SI Problem: The case of $D\leq M$}\label{sec:PIRSI-II}

\subsection{Proof of Achievability for Theorem~\ref{thm:2}}

In this section, we propose a PIR-SI protocol for arbitrary $K$, $M$, and $D\leq M$.


Define $\alpha\triangleq \lfloor M/D \rfloor$, $\beta \triangleq D+\alpha$, $\gamma\triangleq\lfloor K/\beta\rfloor$, and $\rho = K-\beta\gamma$. (Note that $0\leq \rho<\beta$.) Also, define ${\sigma\triangleq \max\{\rho-D,0\}}$. Assume $q\geq \beta$, and let $\omega_1,\dots,\omega_{\beta}$ be $\beta$ distinct elements from $\mathbb{F}_q$. 

\textit{\bf Generalized Partition and Code (GPC) Protocol:} This protocol consists of four steps as follows: 

\textbf{\it Step 1:} First, the user constructs a set $Q_{0}$ of size $\rho$ from the indices in $[K]$, and $\gamma$ disjoint sets $Q_1,\dots,Q_{\gamma}$ (also disjoint from $Q_0$), each of size $\beta$, from the indices in $[K]$. The user randomly chooses $D$ positions among all $K$ positions available in $Q_0,\dots,Q_{\gamma}$, and randomly places the $D$ demand indices in $W$ into these positions. If $Q_0$ (or respectively, $Q_i$ for $i\in [\gamma]$) contains a demand index, the user randomly selects $\sigma$ (or respectively, $\alpha$) indices from $S$ that were not previously selected and positioned, and places them into $Q_0$ (or respectively, $Q_i$). Then, the user randomly places the rest of the indices in $[K]$, that are yet to be placed, into the remaining positions in $Q_0,\dots,Q_{\gamma}$. 

Next, the user creates a collection $Q'$ of $\rho-\sigma$ sequences $Q'_{1},\dots,Q'_{\rho-\sigma}$, each of length $\rho$, such that $Q'_{i}  = \{\omega^{i-1}_1,\dots,\omega^{i-1}_{\rho}\}$ for $i\in [\rho-\sigma]$, and a collection $Q''$ of $D$ sequences $Q''_{1},\dots,Q''_{D}$, each of length $\beta$, such that $Q''_{i}  = \{\omega^{i-1}_1,\dots,\omega^{i-1}_{\beta}\}$ for $i\in [D]$.

\textbf{\it Step 2:} The user constructs $Q^{*}_0 = (Q_0,Q')$ and $Q^{*}_i = (Q_i,Q'')$ for $i\in [\gamma]$, and sends to the server the query $Q^{[W,S]} = \{Q^{*}_{0},\dots,Q^{*}_{\gamma}\}$.

\textbf{\it Step 3:} By using $Q^{*}_0 = (Q_0,Q')$ and $Q^{*}_i = (Q_i,Q'')$, the server computes $A_0 = \{A^1_0,\dots,A^{\rho-\sigma}_0\}$ by $A^j_0 = \sum_{l=1}^{\rho} \omega_l^{j-1} X_{i_l}$ for $j\in [\rho-\sigma]$ where $Q_0 = \{i_1,\dots,i_{\rho}\}$, and computes $A_i = \{A^1_i,\dots,A^D_i\}$ for $i\in [\gamma]$ by $A^{j}_{i} = \sum_{l=1}^{\beta} \omega^{j-1}_{l} X_{i_j}$ for $j\in [D]$ where $Q_i = \{i_1,\dots,i_{\beta}\}$, and sends to the user the answer $A^{[W,S]}=\{A_{0},\dots,A_{\gamma}\}$.

\textbf{\it Step 4:} Upon receiving the answer from the server, the user retrieves $X_j$ for each $j\in W$ by subtracting off the contribution of side information $\{X_i\}_{i\in S}$ from the $D$ equations in $A_{0}$ if $j\in Q_0$, or from the $D$ equations in $A_i$ if $j\in Q_i$, and solving the resulting system of $D$ linear equations with $D$ unknowns. 


The following demonstrates an example where the GPC protocol achieves a higher rate than the GRS Code protocol. 

\begin{example}
Consider a scenario where the server has ${K=10}$ messages $X_1,\dots,X_{10}\in \mathbb{F}_{5}$, and the user's demand and side information index sets are respectively $W=\{3,4\}$ and $S=\{5,8\}$ (i.e., $D=2$ and $M=2$).

The GPC protocol's parameters for this example are: ${\alpha = 1}$, $\beta = 3$, $\gamma = 3$, $\rho = 1$, $\sigma = 0$, and $\{\omega_1,\omega_2,\omega_3\}=\{0,1,2\}$. First, the user creates four sets $Q_0$, $Q_1$, $Q_2$, and $Q_3$, where $Q_0=\{ - \}$ has one position (slot) to be filled, and ${Q_1=\{-,-,-\}}$, ${Q_2=\{-,-,-\}}$, and ${Q_3=\{-,-,-\}}$ have three slots each. The user then randomly chooses two slots (out of the $10$ slots in total) to place the demand indices $3$ and $4$. Say that the user places $4$ in one of the slots in $Q_1$, and places $3$ into one of the slots in $Q_2$, i.e., $Q_0=\{-\}$, ${Q_1=\{4,-,-\}}$, ${Q_2 =\{3,-,-\}}$, and ${Q_3=\{-,-,-\}}$. (The order of the slots within the same set $Q_i$ is irrelevant.) The user then places one randomly chosen side information index in $Q_1$ and the other in $Q_2$. Say that the user randomly chooses the side information index $8$ to place in $Q_1$, i.e., ${Q_0=\{ - \}}$, ${Q_1=\{4,8,-\}}$, ${Q_2=\{3,5,-\}}$, and ${Q_3=\{-,-,-\}}$. Then the user randomly places the rest of the indices into the empty slots in these three sets; say ${Q_0= \{2\}}$, ${Q_1=\{4,6,8\}}$, ${Q_2=\{3,5,7\}}$, and ${Q_3=\{1,9,10\}}$. 

Next, the user forms the collection $Q'$, with the sequence $Q'_1$, which in this example is a sequence of length one, with the element $\omega_1=0$: $Q'_1=\{1\}$; and forms the collection $Q''$ of two sequences $Q''_1$ and $Q''_2$ using the three elements ${\omega_1=0}$, $\omega_2=1$, and $\omega_3=2$: ${Q''_1=\{1,1,1\}}$ and ${Q''_2=\{0,1,2\}}$. The user then sends to the server 
\begin{align*}
(Q_0,Q') &=(\{2\},\{1\}),\\
(Q_1,Q'') &=(\{4,6,8\},\{\{1,1,1\},\{0,1,2\}\}),\\
(Q_2,Q'') &=(\{3,5,7\},\{\{1,1,1\},\{0,1,2\}\}), \\
(Q_3,Q'') &=(\{1,9,10\},\{\{1,1,1\},\{0,1,2\}\}),
\end{align*} and the server sends the user back 
\begin{align*}
A_0 &=\{X_2\},\\
A_1 &=\{{X_4+X_6+X_8},{X_6+2X_8}\},\\
A_2 &=\{{X_3+X_5+X_7},{X_5+2X_7}\}.\\
A_3 &= \{{X_1+X_9+X_{10}},{X_9+2X_{10}} \}.
\end{align*} The user then solves for $X_3$ (and $X_7$) by subtracting off the contribution of $X_5$ from the equations in $A_2$ and solving the resulting equations, and solves for $X_4$ (and $X_6$) by subtracting off the contribution of $X_8$ from the equations in $A_1$, and solving the resulting equations.

The rate of the GPC protocol for this example is $2/7$, whereas the GRS Code protocol achieves a lower rate ${D/(K-M)=2/8}$. 
\end{example}

The next example illustrates a scenario in which the GRS Code protocol achieves a higher rate than the GPC protocol.

\begin{example}
Consider a scenario where the server has ${K=5}$ messages $X_1,X_2,\dots,X_5\in \mathbb{F}_{5}$, and the user's demand and side information index sets are respectively $W=\{2,5\}$ and $S=\{1,3\}$ (i.e. $D=2$ $M=2$). 


In the GPC protocol, the parameters defined for this example are: $\alpha = 1$, $\beta = 3$, $\gamma=1$, $\rho=2$, $\sigma=0$, and $\{\omega_1, \omega_2, \omega_3 \} = \{0,1,2\}$. The user creates two sets $Q_0$ and $Q_1$, where $Q_0 = \{-,-\}$ and $Q_1=\{-,-,-\}$. The user randomly chooses two slots to place $1$ and $3$, say after choosing slots, $Q_0=\{5,-\}$ and $Q_1=\{2,-,-\}$. The user then places one element from $S$ into $Q_1$, say $Q_1=\{1,2,-\}$. Then the user fills the rest of the slots with unplaced elements; say $Q_0=\{3,5\}$ and $Q_1=\{1,2,4\}$. 

Then the user forms the collection $Q'$ by using the elements $\omega_1=0$ and $\omega_2=1$, and the collection of sequences $Q''$ by using the elements $\omega_1=0$, $\omega_2=1$, and $\omega_3=2$. The following are the collections:
\begin{align*}
Q' &= \{\{1,1\},\{0,1\}\}, \\
Q'' &= \{\{1,1,1\}, \{0,1,2\}\}.
\end{align*} The user then sends the server $Q=\{(Q_0,Q'),(Q_1,Q'')\}$, and the server sends the user back the following equations:
\begin{align*}
A_0 &= \{X_3+X_5,X_5\},\\
A_1 &= \{X_1+X_2+X_4,X_2+2X_4\}.
\end{align*} The user then recovers $X_5$ (and $X_3$) from $A_0$, and recovers $X_2$ (and $X_4$) by subtracting $X_1$ from the two equations in $A_1$, and then solving for $X_2$ (and $X_4$). 

For this example, the rate of the GPC protocol is $1/2$, and the rate of the GRS Code protocol is ${D/(K-M)=2/3}$. 
\end{example}

\begin{lemma}\label{lem:3}
The Generalized Partition and Code (GPC) protocol is a PIR-SI protocol, and achieves the rate \[D\left(K-\left\lfloor\frac{M}{D}\right\rfloor\left\lfloor\frac{K}{D+\lfloor M/D\rfloor}\right\rfloor\right)^{-1}\] if $\frac{K-D}{D+\lfloor M/D\rfloor}\leq \left\lfloor\frac{K}{D+\lfloor M/D\rfloor}\right\rfloor$, and \[\left\lceil\frac{K}{D+\lfloor M/D\rfloor}\right\rceil^{-1}\] if $\frac{K-D}{D+\lfloor M/D\rfloor}> \left\lfloor\frac{K}{D+\lfloor M/D\rfloor}\right\rfloor$.  	
\end{lemma}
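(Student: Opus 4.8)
The plan is to verify the three defining properties of a PIR-SI protocol (privacy, recoverability, and rate) for the GPC protocol, handling the two regimes separately only at the very end when computing the rate. First I would establish \textbf{recoverability}. The key observation is that, by construction in Step~1, every demand index $j\in W$ lands in exactly one of the sets $Q_0,\dots,Q_\gamma$, and whenever a set $Q_i$ ($i\in[\gamma]$) receives a demand index it also receives exactly $\alpha=\lfloor M/D\rfloor$ side information indices (and $Q_0$ receives $\sigma=\max\{\rho-D,0\}$ of them). Since each set $Q_i$ contributes either $D$ equations (for $i\in[\gamma]$) or $\rho-\sigma$ equations (for $i=0$), I would check that after subtracting off the contributions of the side information symbols $X_S$ located in that block, the number of remaining unknowns in $Q_i$ is at most $\beta-\alpha = D$ for $i\in[\gamma]$, and at most $\rho-\sigma\leq D$ for $Q_0$; the coefficient matrix of the reduced system is a square submatrix of a Vandermonde matrix (coming from $Q''$ or $Q'$ with distinct parameters $\omega_1,\dots,\omega_\beta\in\mathbb{F}_q$), hence invertible, so the user recovers all non-side-information messages in that block, in particular $X_j$. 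This gives $H(X_W\mid A^{[W,S]},Q^{[W,S]},X_S)=0$.

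Next I would establish \textbf{privacy}, i.e.\ $\mathbb{P}(\boldsymbol W=W^*\mid Q^{[W,S]},X)=1/\binom{K}{D}$ for every $W^*$ of size $D$. The argument is the standard symmetrization used for the Partition and Code scheme in~\cite{Kadhe2017}: the query $Q^{[W,S]}$ reveals only the unordered partition structure $\{Q_0,\dots,Q_\gamma\}$ of $[K]$ together with the fixed Vandermonde coefficient sequences $Q',Q''$ (which do not depend on $W$ or $S$ at all). So it suffices to show that, conditioned on the realized partition $\{Q_0,\dots,Q_\gamma\}$, every candidate demand set $W^*$ of size $D$ is equally likely. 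I would argue this by exhibiting, for the actual $(W,S)$ and any target $W^*$, a bijection between the randomness realizations consistent with $(W,Q)$ and those consistent with $(W^*,Q)$: because $|S|=M\geq D$ and $\alpha D \le M$, $\sigma \le \rho$, there are always enough side information indices to ``cover'' whichever blocks contain the $D$ indices of $W^*$, and the remaining indices are placed uniformly at random; combined with the uniform prior on $(\boldsymbol W,\boldsymbol S)$, this yields the required uniform posterior. I would also note in passing (as Theorem~\ref{thm:2} claims) that this same bijection argument extends to $(W,S)$-privacy in the relevant cases, though that is not strictly needed for Lemma~\ref{lem:3}.

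Finally, the \textbf{rate} computation. The total download is $H(A^{[W,S]}) = (\rho-\sigma)L + \gamma D L$, and since the Vandermonde-coded symbols within distinct blocks involve disjoint message sets and are each individually uniform, they are jointly independent, so this entropy is exact and independent of $(W,S)$; hence the rate is $DL$ over that quantity, i.e.\ $D/((\rho-\sigma)+\gamma D)$. Now I substitute $\sigma=\max\{\rho-D,0\}$ and split into cases. When $\rho \ge D$, i.e.\ $\rho-\sigma = D$ — which after unwinding $\rho=K-\beta\gamma$, $\beta=D+\alpha$, $\gamma=\lfloor K/\beta\rfloor$ is exactly the condition $\frac{K-D}{D+\lfloor M/D\rfloor} > \lfloor\frac{K}{D+\lfloor M/D\rfloor}\rfloor$ (since $\rho \ge D \iff K - \beta\gamma \ge D \iff K-D \ge \beta\gamma$, and one checks $\gamma+1 = \lceil K/\beta\rceil$ here) — the denominator becomes $D(\gamma+1) = D\lceil K/\beta\rceil$, giving rate $\lceil K/(D+\lfloor M/D\rfloor)\rceil^{-1}$. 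When $\rho < D$, i.e.\ $\sigma=0$ and $\rho-\sigma=\rho$, which is the complementary condition $\frac{K-D}{D+\lfloor M/D\rfloor}\le\lfloor\frac{K}{D+\lfloor M/D\rfloor}\rfloor$, the denominator is $\rho+\gamma D = (K-\beta\gamma)+\gamma D = K - \alpha\gamma = K-\lfloor M/D\rfloor\lfloor K/(D+\lfloor M/D\rfloor)\rfloor$, giving the first claimed rate. I expect the main obstacle to be the privacy bijection: one must carefully account for the randomized placement of demand indices, the conditional placement of exactly $\alpha$ (resp.\ $\sigma$) side-information indices into each block that received a demand index, and the uniform filling of the rest, and verify that the induced distribution on partitions is invariant under swapping $W$ for any $W^*$ — the edge cases being blocks that receive more than one demand index and the block $Q_0$ of the anomalous size $\rho$.
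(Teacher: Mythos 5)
Your recoverability and rate computations are fine and essentially match the paper's (the paper is in fact terser on recoverability, and your case split at $\rho\ge D$ versus $\rho<D$ agrees with the paper's split at the boundary $\rho=D$ since both formulas coincide there). The genuine gap is in the privacy part, which is the heart of the lemma and occupies nearly all of the paper's proof. You assert a ``standard symmetrization/bijection'' between randomness realizations consistent with $(W,Q)$ and those consistent with $(W^{*},Q)$, justified only by a feasibility observation ($M\ge D$, $\alpha D\le M$, $\sigma\le\rho$, so every $W^{*}$ can be covered by some side information). Feasibility only shows that every $W^{*}$ has nonzero posterior; the privacy condition requires the posterior to be \emph{exactly} $1/\binom{K}{D}$, and that is not delivered by any realization-level bijection. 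Indeed, such a bijection cannot be probability-preserving in general: the conditional probability of producing a given partition depends strongly on the placement pattern $N=(n_0,\dots,n_\gamma)$ (how many demand indices land in $Q_0$ of anomalous size $\rho$, how many blocks receive one versus several demand indices, how many side-information indices are placed deliberately, namely $\sigma+\lambda\alpha$). For instance, with $K=4$, $D=1$, $M=1$ the partition is \emph{deterministic} given $(W,S)$, and uniformity of the posterior only emerges after averaging over the prior on $S$ given $W$; so the correspondence you need is a measure-preserving, many-to-many matching weighted by the prior on $S$ and by the protocol's internal randomness, not a bijection of individual realizations.

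What actually has to be shown is that $\mathbb{P}(\boldsymbol{Q}=Q\mid \boldsymbol{W}=W^{*})$, averaged over $\boldsymbol{S}$, is the same for \emph{every} $W^{*}$, uniformly across all patterns $N$ -- including several demand indices in one block and demand indices in $Q_0$ with its $\sigma$ side-information placements. This is precisely the counting the paper carries out: computing $\mathbb{P}(\boldsymbol{Q}=Q\mid \boldsymbol{W},\boldsymbol{S},\boldsymbol{N})$ in~\eqref{eq:5}, the numbers $T_1$ and $T_2$ of compatible $(W,S)$ and $S$, and verifying that all the $N$-dependent multinomial factors cancel so that $\mathbb{P}(\boldsymbol{Q}=Q\mid\boldsymbol{N}=N)$ equals the uniform value $\gamma!\rho!(\beta!)^{\gamma}/K!$ of~\eqref{eq:7}, after which $\mathbb{P}(\boldsymbol{N}=N\mid\boldsymbol{Q}=Q)\cdot\mathbb{P}(\boldsymbol{W}=W\mid\boldsymbol{N}=N,\boldsymbol{Q}=Q)$ collapses to $1/\binom{K}{D}$ via~\eqref{eq:3} and~\eqref{eq:8}. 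You yourself flag this as ``the main obstacle,'' but the proposal neither constructs the claimed bijection nor performs the cancellation, so the privacy claim -- and hence the lemma -- is not established as written.
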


\begin{proof} 
Recall that $H(X_1) = \dots = H(X_K) = L$. If $\frac{K-D}{D+\lfloor M/D\rfloor}\leq \lfloor\frac{K}{D+\lfloor M/D \rfloor}\rfloor$, then $0\leq \rho\leq D$. Thus, $\sigma = 0$, and accordingly, ${\rho-\sigma = \rho}$. In this case, $H(A_0) = \rho L$ and $H(A_i) = DL$ for $i\in [\gamma]$, and consequently, $H(A^{[W,S]})=H(A_0,\dots,A_{\gamma})=\sum_{i=0}^{\gamma} H(A_i)=(\rho+\gamma D)L$ (by the independence of $A_0,\dots,A_{\gamma}$), for any $W\subset [K]$, $|W|=D$ and any $S\subseteq [K]\setminus W$, $|S|=M$. Thus, in this case, the rate, i.e., $DL/H(A^{[\boldsymbol{W},\boldsymbol{S}]}) = DL/H(A^{[W,S]})$, is equal to \[\frac{D}{(\rho+\gamma D)} = {D\left(K-\left\lfloor\frac{M}{D}\right\rfloor \left\lfloor\frac{K}{D+\left\lfloor M/D\right\rfloor}\right\rfloor\right)^{-1}}.\] If $\frac{K-D}{D+\lfloor M/D\rfloor}> \lfloor\frac{K}{D+\lfloor M/D \rfloor}\rfloor$, then $\rho> D$. Thus, ${\sigma = \rho-D}$, and accordingly, $\rho-\sigma= D$. In this case, $H(A_0) = H(A_i) = DL$ for $i\in [\gamma]$, and as a consequence, ${H(A^{[W,S]})=(\gamma+1)DL}$. The rate in this case is equal to \[\frac{D}{(D+\gamma D)} = \left\lceil\frac{K}{D+\lfloor M/D\rfloor}\right\rceil^{-1}.\] 

Next, we prove that the GPC protocol is a PIR-SI protocol. By the structure of the GPC protocol, it is easy to see that the recoverability condition is satisfied. Then, it remains to prove that the GPC protocol satisfies the privacy condition. The details of the proof are as follows. 

 



Consider an arbitrary ${Q = \{Q_0,\dots,Q_{\gamma}\}}$ (see Step~1 of the GPC protocol for definition of $Q_i$'s). Consider arbitrary $W\subset [K]$, $|W|=D$, and ${S\subseteq [K]\setminus W}$, ${|S|=M}$ such that $W$ and $S$ comply with $Q$, i.e., given that $W$ and $S$ are respectively the user's demand and side information index sets, the GPC protocol could potentially construct $Q$. 

Define $n_0,\dots,n_{\gamma}$ as the number of indices from $W$ that are placed into the sets $Q_0,\dots,Q_{\gamma}$, respectively. Note that ${0\leq n_0\leq \rho-\sigma}$ and $0\leq n_i\leq D$ for $i\in [\gamma]$. Note also that ${\sum_{i=0}^{\gamma} n_i = D}$. Let $\lambda$ be the number of sets $\{Q_i\}_{i\in [\gamma]}$ such that ${n_i\neq 0}$, i.e., $\lambda=\sum_{i=1}^{\gamma} \mathds{1}_{\{n_i\neq 0\}}$. Assume, without loss of generality, that ${n_1,\dots,n_{\lambda}\neq 0}$ and ${n_{\lambda+1},\dots,n_{\gamma}=0}$. Note that no assumption is made on $n_0$. Let ${m_j\triangleq 1}$ for all ${j\in [D]\setminus \{n_1,\dots,n_{\lambda}\}}$, and let ${m_j \triangleq \sum_{i=0}^{\lambda} \mathds{1}_{\{n_i=j\}}}$ for all $j\in \{n_1,\dots,n_{\lambda}\}$.

In order to prove that the GPC protocol satisfies the privacy condition, we need to show that 
\begin{align*}
\mathbb{P}(\boldsymbol{W} = W| \boldsymbol{Q} = Q,\boldsymbol{X}=X) = \frac{1}{\binom{K}{D}}.	
\end{align*}

Since $Q$ does not depend on the contents of the messages in $X$ (by the structure of the GPC protocol), then ${\mathbb{P}(\boldsymbol{W}=W|\boldsymbol{Q}=Q,\boldsymbol{X}=X)=\mathbb{P}(\boldsymbol{W}=W|\boldsymbol{Q}=Q)}$. Thus, it suffices to show that
\begin{align}\label{eq:1}
\mathbb{P}(\boldsymbol{W} = W| \boldsymbol{Q} = Q) = \frac{1}{\binom{K}{D}}.	
\end{align}  

Since $N= \{n_1,\dots,n_{\gamma}\}$ is uniquely determined by $Q$ and $W$, we can write
\begin{align}\nonumber
\mathbb{P}(\boldsymbol{W}=W|\boldsymbol{Q} = Q) &= \mathbb{P}(\boldsymbol{W}=W,\boldsymbol{N} = N|\boldsymbol{Q}=Q) \\ \nonumber 
& = \mathbb{P}(\boldsymbol{N}=N|\boldsymbol{Q}=Q)\\\label{eq:2} & \quad \times \mathbb{P}(\boldsymbol{W}=W|\boldsymbol{N} = N,\boldsymbol{Q} = Q)	.
\end{align} Thus, we need to compute $\mathbb{P}(\boldsymbol{N}=N|\boldsymbol{Q}=Q)$ and ${\mathbb{P}(\boldsymbol{W}=W|\boldsymbol{N} = N, \boldsymbol{Q}=Q)}$. 

First, we show that 
\begin{align}\nonumber
\mathbb{P}(\boldsymbol{N}=N|\boldsymbol{Q}=Q) & = \mathbb{P}(\boldsymbol{N} = N)\\ \label{eq:3}
& = \frac{\binom{\gamma}{\lambda}\binom{\lambda}{m_1,\dots,m_D}\binom{\rho}{n_0}\prod_{i=1}^{\lambda}\binom{\beta}{n_i}}{\binom{K}{D}}.
\end{align} Note that 
\begin{align*}
& \mathbb{P}(\boldsymbol{Q}=Q|\boldsymbol{N}=N) = \\  & \quad \sum \mathbb{P}(\boldsymbol{Q} = Q,\boldsymbol{W}=W,\boldsymbol{S}=S|\boldsymbol{N}=N)
\end{align*} where the sum is over all $W,S$. Since the events ${(\boldsymbol{W}=W,\boldsymbol{S}=S)}$ and $\boldsymbol{N}=N$ are independent, ${\mathbb{P}(\boldsymbol{Q}=Q|\boldsymbol{N}=N)}$ can be written as
\begin{align*}
& \sum \mathbb{P}(\boldsymbol{W}=W,\boldsymbol{S}=S)\\ &\quad \hspace{0.125cm}\times \mathbb{P}(\boldsymbol{Q}=Q|\boldsymbol{W}=W,\boldsymbol{S}=S,\boldsymbol{N}=N)
\end{align*} where the sum is over all $W,S$ that comply with $Q$ given $N$. By assumption,
\begin{align}\label{eq:4}
\mathbb{P}(\boldsymbol{W}=W,\boldsymbol{S}=S) =\frac{1}{\binom{K}{D}\binom{K-D}{M}}.	
\end{align} 

Next, we compute $\mathbb{P}(\boldsymbol{Q}=Q|\boldsymbol{W}=W,\boldsymbol{S}=S,\boldsymbol{N}=N)$. Since $N = \{n_0,\dots,n_{\gamma}\}$, the number of ways that the GPC protocol could position the $D$ indices from $W$ into $Q_0,\dots,Q_{\lambda}$ (for a given $N$) is $\binom{D}{n_0,\dots,n_{\lambda}}$. Since all the $m_j$ parts of the same size $j\in [D]$ can be arbitrarily permuted without any change in the structure of the query, the number of distinct ways for placing the $D$ indices from $W$ in $Q_0,\dots,Q_{\lambda}$ (for a given $N$) is \[\frac{1}{\prod_{i=1}^{D} m_i!}\binom{D}{n_0,\dots,n_{\lambda}} = \frac{D!}{\prod_{i=1}^{D} m_i! \prod_{i=0}^{\lambda} n_i!}.\] For placing indices from $S$ next to the previously-placed indices from $W$, the GPC protocol randomly selects $\sigma+\lambda\alpha$ indices from the $M$ indices in $S$, particularly, $\sigma$ indices to be placed in $Q_0$, and $\alpha$ indices to be placed in $Q_i$ for each $i\in [\lambda]$. This selection of indices can be done in $\binom{M}{\sigma+\lambda\alpha}$ ways. (Note that $\sigma+\lambda\alpha\leq M$.) The GPC protocol randomly partitions the $\sigma+\lambda\alpha$ indices selected from $S$ into $\lambda+1$ groups as follows: a group of size $\sigma$, and $\lambda$ groups each of size $\alpha$. The number of ways for such a partitioning is \[\binom{\sigma+\lambda\alpha}{\sigma,\alpha,\dots,\alpha} = \frac{(\sigma+\lambda\alpha)!}{\sigma! (\alpha!)^{\lambda}}.\] The remaining $K-D-\sigma-\lambda\alpha$ indices in $[K]$, yet to be placed by the GPC protocol, can be randomly partitioned into $\gamma+1$ groups as follows: a group of size $\rho-\sigma-n_0$, $\lambda$ groups of sizes $\{\beta-\alpha-n_{i}\}$ for all $i\in [\lambda]$, and $\gamma-\lambda$ groups each of size $\beta$. Since the $\gamma-\lambda$ groups of size $\beta$ can be arbitrarily permuted without changing the structure of the query, the number of distinct ways of such a partitioning is 
\begin{align*}
& \frac{1}{(\gamma-\lambda)!} \\ &\hspace{0.125cm} \times \binom{K-D-\sigma-\lambda\alpha}{\rho-\sigma-n_0,\beta-\alpha-n_{1},\dots,\beta-\alpha-n_{\lambda},\beta,\dots,\beta} \\ & = \frac{(K-D-\sigma-\lambda\alpha)!}{(\gamma-\lambda)!(\rho-\sigma-n_0)!\prod_{i=1}^{\lambda}(\beta-\alpha-n_{i})!(\beta!)^{\gamma-\lambda}}.	
\end{align*} Putting these arguments together, we can write
 \begin{align}\nonumber
 & \mathbb{P}(\boldsymbol{Q} = Q|\boldsymbol{W} = W,\boldsymbol{S} = S, \boldsymbol{N} = N) = \\ \nonumber
 & \quad \frac{\prod_{i=1}^{D} m_i! \prod_{i=0}^{\lambda} n_i!(M-\sigma-\lambda\alpha)!\sigma! (\alpha!)^{\lambda}}{D!M!}	\\ \label{eq:5} &\quad \times \frac{(\gamma-\lambda)!(\rho-\sigma-n_0)!\prod_{i=1}^{\lambda}(\beta-\alpha-n_{i})!(\beta!)^{\gamma-\lambda}}{(K-D-\sigma-\lambda\alpha)!}.
 \end{align} 

Let $T_1$ be the number of $(W,S)$ that comply with $Q$ given $N$. To compute $T_1$, we proceed as follows. There are $\binom{\rho}{n_0}$ ways to choose $n_0$ indices from $Q_0$ to be demand indices. There are $\binom{\gamma}{\lambda}$ ways to choose $\lambda$ sets $Q_{1},\dots,Q_{\lambda}$ from $Q_1,\dots,Q_{\gamma}$, and 
\begin{align*}
& \binom{\lambda}{m_1,\dots,m_D}\binom{\beta}{n_{1}}\cdots \binom{\beta}{n_{\lambda}} \\ & \quad = \frac{\lambda!(\beta!)^{\lambda}}{\prod_{i=1}^{D} m_i! \prod_{i=1}^{\lambda} n_{i}! \prod_{i=1}^{\lambda}(\beta-n_{i})!}	
\end{align*} ways to choose $D$ indices from $Q_{1},\dots,Q_{\lambda}$, particularly $n_{i}$ indices from $Q_{i}$ for each $i\in [\lambda]$, to be demand indices. There are $\binom{\rho-n_0}{\sigma}$ ways to choose $\sigma$ indices from $Q_0$ to be side information indices, and 
\begin{align*}
\binom{\beta-n_{1}}{\alpha}\cdots \binom{\beta-n_{\lambda}}{\alpha} = \frac{\prod_{i=1}^{\lambda}(\beta-n_i)!}{(\alpha!)^{\lambda}\prod_{i=1}^{\lambda}(\beta-\alpha-n_i)!}	
\end{align*} ways to choose $\lambda\alpha$ indices from $Q_{1},\dots,Q_{\lambda}$, particularly, $\alpha$ indices from $Q_{i}$ for each $i\in [\lambda]$, to be side information indices. The total number of ways to choose the remaining ${M-\sigma-\lambda\alpha}$ indices to be side information indices is \[\binom{K-D-\sigma-\lambda\alpha}{M-\sigma-\lambda\alpha} = \frac{(K-D-\sigma-\lambda\alpha)!}{(M-\sigma-\lambda\alpha)!(K-D-M)!}.\] By combining the above arguments, 
\begin{align*}
T_1 & = \binom{\rho}{n_0}\binom{\gamma}{\lambda}\binom{\lambda}{m_1,\dots,m_D}\prod_{i=1}^{\lambda}\binom{\beta}{n_i} \\ &\quad \times \binom{\rho-n_0}{\sigma}\prod_{i=1}^{\lambda}\binom{\beta-n_i}{\alpha}\binom{K-D-\sigma-\lambda\alpha}{M-\sigma-\lambda\alpha}\\ & = \frac{\rho!\gamma!(\beta!)^{\lambda}}{(\gamma-\lambda)!\prod_{i=1}^{D} m_i! \prod_{i=0}^{\lambda}n_i! \prod_{i=1}^{\lambda}(\beta-\alpha-n_i)!\sigma!} \\ 
& \quad \times	\frac{(K-D-\sigma-\lambda\alpha)!}{(\rho-\sigma-n_0)!(\alpha!)^{\lambda}(M-\sigma-\lambda\alpha)!(K-D-M)!}
\end{align*}

Since ${\mathbb{P}(\boldsymbol{Q} = Q|\boldsymbol{W} = W,\boldsymbol{S} = S, \boldsymbol{N} = N)}$ and ${\mathbb{P}(\boldsymbol{W}=W,\boldsymbol{S}=S)}$ do not depend on indices in $W$ and $S$ (by~\eqref{eq:4} and~\eqref{eq:5}), we can write
\begin{align}\nonumber
& \mathbb{P}(\boldsymbol{Q}=Q|\boldsymbol{N}=N) \\ \nonumber & \quad= T_1\times \mathbb{P}(\boldsymbol{W}=W,\boldsymbol{S}=S) \\ \nonumber & \quad \quad\times \mathbb{P}(\boldsymbol{Q} = Q|\boldsymbol{W} = W,\boldsymbol{S} = S, \boldsymbol{N} = N)\\ \label{eq:6}
& \quad = \frac{\gamma!\rho!(\beta!)^{\gamma}}{K!}.	
\end{align} 

From the perspective of the server, all partitions $Q$ are equiprobable (by construction), and the total number of distinct partitions is \[\frac{1}{\gamma!}\binom{K}{\rho,\beta,\dots,\beta} = \frac{K!}{\gamma!\rho!(\beta!)^{\gamma}}.\] Thus, 
\begin{equation}\label{eq:7}
\mathbb{P}(\boldsymbol{Q} = Q) = \frac{\gamma ! \rho!(\beta !)^{\gamma}}{K!}.	
\end{equation} 

By~\eqref{eq:6} and~\eqref{eq:7}, the events $\boldsymbol{Q}=Q$ and $\boldsymbol{N}=N$ are independent. Thus, $\mathbb{P}(\boldsymbol{N}=N|\boldsymbol{Q}=Q) = \mathbb{P}(\boldsymbol{N}=N)$. Since $N$ can be chosen in \[\binom{\gamma}{\lambda}\binom{\lambda}{m_1,\dots,m_D}\binom{\rho}{n_0}\prod_{i=1}^{\lambda}\binom{\beta}{n_i}\] ways (as argued earlier), and there are $\binom{K}{D}$ ways in total to place the $D$ indices from $W$ into $Q_0,\dots,Q_{\gamma}$, we get
\begin{equation*}\label{eq:8}
\mathbb{P}(\boldsymbol{N}=N) = \frac{\binom{\gamma}{\lambda}\binom{\lambda}{m_1,\dots,m_D}\binom{\rho}{n_0}\prod_{i=1}^{\lambda}\binom{\beta}{n_i}}{\binom{K}{D}}.	
\end{equation*} This completes the proof of~\eqref{eq:3}.

Next, we shall show that 
\begin{align}\nonumber
& \mathbb{P}(\boldsymbol{W}=W|\boldsymbol{N}=N,\boldsymbol{Q}=Q) \\ \label{eq:8} & \quad = \frac{1}{\binom{\gamma}{\lambda}\binom{\lambda}{m_1,\dots,m_D}\binom{\rho}{n_0}\prod_{i=1}^{\lambda}\binom{\beta}{n_i}}.	
\end{align} Note that 
\begin{align}\nonumber
& \mathbb{P}(\boldsymbol{W}=W|\boldsymbol{N}=N,\boldsymbol{Q}=Q) \\ \label{eq:10} & \quad = \sum \mathbb{P}(\boldsymbol{W}=W,\boldsymbol{S}=S|\boldsymbol{N}=N,\boldsymbol{Q}=Q)
\end{align} where the sum is over all $S$ such that $W,S$ comply with $Q$ given $N$. Let $T_2$ be the number of such $S$. By using similar arguments as for the case of $T_1$, 
\begin{align*}
T_2 & = \binom{\rho-n_0}{\sigma}\prod_{i=1}^{\lambda}\binom{\beta-n_i}{\alpha}\binom{K-D-\sigma-\lambda\alpha}{M-\sigma-\lambda\alpha} \\ & = 	\frac{(\rho-n_0)!(\rho-\sigma-n_0)!\prod_{i=1}^{\lambda}(\beta-n_i)!}{\sigma!(\alpha!)^{\lambda}\prod_{i=1}^{\lambda}(\beta-\alpha-n_i)!} \\ &\quad \times \frac{(K-D-\sigma-\lambda\alpha)!}{(M-\sigma-\lambda\alpha)!(K-D-M)!}
\end{align*} 

By applying the Bayes' rule and the chain rule of conditional probability, we can write
\begin{align}\nonumber 
&\mathbb{P}(\boldsymbol{W}=W,\boldsymbol{S}=S|\boldsymbol{N}=N,\boldsymbol{Q}=Q) \\ \nonumber & \quad = \mathbb{P}(\boldsymbol{W}=W,\boldsymbol{S}=S) \\ \nonumber &\quad \quad \times \frac{\mathbb{P}(\boldsymbol{N}=N,\boldsymbol{Q}=Q|\boldsymbol{W}=W,\boldsymbol{S}=S)}{\mathbb{P}(\boldsymbol{N}=N,\boldsymbol{Q}=Q)}\\ \nonumber & \quad = \mathbb{P}(\boldsymbol{W}=W,\boldsymbol{S}=S)\mathbb{P}(\boldsymbol{N}=N|\boldsymbol{W}=W,\boldsymbol{S}=S) \\ \label{eq:11} &\quad \quad \times \frac{\mathbb{P}(\boldsymbol{Q}=Q|\boldsymbol{W}=W,\boldsymbol{S}=S,\boldsymbol{N} = N)}{\mathbb{P}(\boldsymbol{N}=N)\mathbb{P}(\boldsymbol{Q}=Q)}. 
\end{align} By the independence of the events $(\boldsymbol{W}=W,\boldsymbol{S}=S)$ and ${\boldsymbol{N}=N}$, $\mathbb{P}(\boldsymbol{N}=N|\boldsymbol{W}=W,\boldsymbol{S}=S) = \mathbb{P}(\boldsymbol{N}=N)$. Rewriting~\eqref{eq:11}, 
\begin{align}\nonumber
&\mathbb{P}(\boldsymbol{W}=W,\boldsymbol{S}=S|\boldsymbol{N}=N,\boldsymbol{Q}=Q)	 \\ \nonumber &\quad =\mathbb{P}(\boldsymbol{W}=W,\boldsymbol{S}=S) \\ \label{eq:12}
& \quad \quad \times \frac{\mathbb{P}(\boldsymbol{Q}=Q|\boldsymbol{W}=W,\boldsymbol{S}=S,\boldsymbol{N} = N)}{\mathbb{P}(\boldsymbol{Q}=Q)}.
\end{align} By using~\eqref{eq:4},~\eqref{eq:5}, and~\eqref{eq:7}, and rewriting~\eqref{eq:12}, 
\begin{align}\nonumber
& \mathbb{P}(\boldsymbol{W}=W,\boldsymbol{S}=S|\boldsymbol{N}=N,\boldsymbol{Q}=Q)	 \\ \nonumber & \quad = \frac{\prod_{i=1}^{D} m_i ! \prod_{i=0}^{\lambda} n_i ! (M-\sigma-\lambda\alpha)!\sigma!(\alpha!)^{\lambda}(\gamma-\lambda)!}{\gamma!\rho!(\beta!)^{\lambda}}\\ \label{eq:13} &\quad \quad \times\frac{(K-D-M)!(\rho-\sigma-n_0)!\prod_{i=1}^{\lambda}(\beta-\alpha-n_i)!}{(K-D-\sigma-\lambda\alpha)!}
\end{align} Since $\mathbb{P}(\boldsymbol{W}=W,\boldsymbol{S}=S|\boldsymbol{N}=N,\boldsymbol{Q}=Q)$ does not depend on the indices of $W$ and $S$ (by~\eqref{eq:13}), we can rewrite~\eqref{eq:10} as
\begin{align}\nonumber
& \mathbb{P}(\boldsymbol{W}=W|\boldsymbol{N}=N,\boldsymbol{Q}=Q) \\ \label{eq:14} & \quad = T_2\times \mathbb{P}(\boldsymbol{W}=W,\boldsymbol{S}=S|\boldsymbol{N}=N,\boldsymbol{Q}=Q) 	
\end{align} By combining~\eqref{eq:13} and~\eqref{eq:14}, 
\begin{align}\nonumber
& \mathbb{P}(\boldsymbol{W}=W|\boldsymbol{N}=N,\boldsymbol{Q}=Q) \\ \nonumber & \hspace{0.125cm} = \frac{(\gamma-\lambda)!\prod_{i=1}^{D} m_i !(\rho-n_0)! \prod_{i=0}^{\lambda} n_i!\prod_{i=1}^{\lambda}(\beta-n_i)!}{\gamma! \rho! (\beta!)^{\lambda}} 
\\ \nonumber & \hspace{0.125cm} = \frac{1}{\binom{\gamma}{\lambda}\binom{\lambda}{m_1,\dots,m_D}\binom{\rho}{n_0}\prod_{i=1}^{\lambda}\binom{\beta}{n_i}}.	
\end{align} This completes the proof of~\eqref{eq:8}. 

Putting~\eqref{eq:2},~\eqref{eq:3}, and~\eqref{eq:8} together, we get
\begin{align*}
\mathbb{P}(\boldsymbol{W}=W|\boldsymbol{Q} = Q) = \frac{1}{\binom{K}{D}}.
\end{align*} This proves that the GPC protocol satisfies the privacy condition. 
\end{proof}


\bibliographystyle{IEEEtran}
\bibliography{PIR_salim,pir_bib,coding1,coding2}

\begin{thebibliography}{10}
\providecommand{\url}[1]{#1}
\csname url@samestyle\endcsname
\providecommand{\newblock}{\relax}
\providecommand{\bibinfo}[2]{#2}
\providecommand{\BIBentrySTDinterwordspacing}{\spaceskip=0pt\relax}
\providecommand{\BIBentryALTinterwordstretchfactor}{4}
\providecommand{\BIBentryALTinterwordspacing}{\spaceskip=\fontdimen2\font plus
\BIBentryALTinterwordstretchfactor\fontdimen3\font minus
  \fontdimen4\font\relax}
\providecommand{\BIBforeignlanguage}[2]{{%
\expandafter\ifx\csname l@#1\endcsname\relax
\typeout{** WARNING: IEEEtran.bst: No hyphenation pattern has been}%
\typeout{** loaded for the language `#1'. Using the pattern for}%
\typeout{** the default language instead.}%
\else
\language=\csname l@#1\endcsname
\fi
#2}}
\providecommand{\BIBdecl}{\relax}
\BIBdecl

\bibitem{Chor:PIR1995}
B.~Chor, O.~Goldreich, E.~Kushilevitz, and M.~Sudan, ``Private information
  retrieval,'' in \emph{IEEE Symposium on Foundations of Computer Science},
  1995, pp. 41--50.

\bibitem{beimel2002breaking}
A.~Beimel, Y.~Ishai, E.~Kushilevitz, and J.-F. Raymond, ``Breaking the
  {O}$(n^{1/(2k-1)})$ barrier for information-theoretic private information
  retrieval,'' in \emph{43rd Annual IEEE Symposium on Foundations of Computer
  Science}, 2002, pp. 261--270.

\bibitem{gasarch2004survey}
W.~Gasarch, ``A survey on private information retrieval,'' in \emph{Bulletin of
  the EATCS}.\hskip 1em plus 0.5em minus 0.4em\relax Citeseer, 2004.

\bibitem{yekhanin2010private}
S.~Yekhanin, ``Private information retrieval,'' \emph{Communications of the
  ACM}, vol.~53, no.~4, pp. 68--73, 2010.

\bibitem{Sun2017}
H.~Sun and S.~A. Jafar, ``The capacity of private information retrieval,''
  \emph{IEEE Transactions on Information Theory}, vol.~63, no.~7, pp.
  4075--4088, July 2017.

\bibitem{JafarPIR3new}
------, ``The capacity of robust private information retrieval with colluding
  databases,'' \emph{IEEE Transactions on Information Theory}, vol.~64, no.~4,
  pp. 2361--2370, April 2018.

\bibitem{kushilevitz1997replication}
E.~Kushilevitz and R.~Ostrovsky, ``Replication is not needed: Single database,
  computationally-private information retrieval,'' in \emph{IEEE Symposium on
  Foundations of Computer Science}, 1997, p. 364.

\bibitem{chor1997private}
B.~Chor, N.~Gilboa, and M.~Naor, \emph{Private information retrieval by
  keywords}.\hskip 1em plus 0.5em minus 0.4em\relax Citeseer, 1997.

\bibitem{cPIRPoly}
C.~Cachin, S.~Micali, and M.~Stadler, ``Computationally private information
  retrieval with polylogarithmic communication,'' in \emph{International
  Conference on the Theory and Applications of Cryptographic Techniques}.\hskip
  1em plus 0.5em minus 0.4em\relax Springer, 1999, pp. 402--414.

\bibitem{gentry2005single}
C.~Gentry and Z.~Ramzan, ``Single-database private information retrieval with
  constant communication rate,'' in \emph{International Colloquium on Automata,
  Languages, and Programming}.\hskip 1em plus 0.5em minus 0.4em\relax Springer,
  2005, pp. 803--815.

\bibitem{sion2007computational}
R.~Sion and B.~Carbunar, ``On the computational practicality of private
  information retrieval,'' in \emph{Proceedings of the Network and Distributed
  Systems Security Symposium}.\hskip 1em plus 0.5em minus 0.4em\relax Internet
  Society, 2007, pp. 2006--06.

\bibitem{Kadhe2017}
S.~Kadhe, B.~Garcia, A.~Heidarzadeh, S.~E. Rouayheb, and A.~Sprintson,
  ``Private information retrieval with side information: The single server
  case,'' in \emph{2017 55th Annual Allerton Conference on Communication,
  Control, and Computing}, Oct 2017, pp. 1099--1106.

\bibitem{Roth:06}
R.~Roth, \emph{Introduction to Coding Theory}.\hskip 1em plus 0.5em minus
  0.4em\relax New York, NY, USA: Cambridge University Press, 2006.

\bibitem{beimel2001information}
A.~Beimel and Y.~Ishai, ``Information-theoretic private information retrieval:
  A unified construction,'' in \emph{Automata, Languages and
  Programming}.\hskip 1em plus 0.5em minus 0.4em\relax Springer, 2001, pp.
  912--926.

\bibitem{Banawan2017}
\BIBentryALTinterwordspacing
K.~Banawan and S.~Ulukus, ``Multi-message private information retrieval:
  Capacity results and near-optimal schemes,'' \emph{CoRR}, vol.
  abs/1702.01739, 2017. [Online]. Available:
  \url{http://arxiv.org/abs/1702.01739}
\BIBentrySTDinterwordspacing

\bibitem{shah2014one}
N.~Shah, K.~Rashmi, and K.~Ramchandran, ``One extra bit of download ensures
  perfectly private information retrieval,'' in \emph{2014 IEEE International
  Symposium on Information Theory (ISIT)}.\hskip 1em plus 0.5em minus
  0.4em\relax IEEE, 2014, pp. 856--860.

\bibitem{chan2014private}
T.~H. Chan, S.-W. Ho, and H.~Yamamoto, ``Private information retrieval for
  coded storage,'' \emph{arXiv preprint arXiv:1410.5489}, 2014.

\bibitem{tajeddine2016private}
R.~Tajeddine and S.~El~Rouayheb, ``Private information retrieval from mds coded
  data in distributed storage systems,'' in \emph{Information Theory (ISIT),
  2016 IEEE International Symposium on}.\hskip 1em plus 0.5em minus 0.4em\relax
  IEEE, 2016, pp. 1411--1415.

\bibitem{extended}
{R. Tajeddine, S. El Rouayheb}, ``{Private Information Retrieval from MDS Coded
  data in Distributed Storage Systems (extended version)},'' 2016,
  {http://www.ece.iit.edu/~salim/PIRv2.pdf}.

\bibitem{BU18}
K.~Banawan and S.~Ulukus, ``The capacity of private information retrieval from
  coded databases,'' \emph{IEEE Transactions on Information Theory}, vol.~64,
  no.~3, pp. 1945--1956, March 2018.

\bibitem{fazeli2015pir}
A.~Fazeli, A.~Vardy, and E.~Yaakobi, ``Pir with low storage overhead: Coding
  instead of replication,'' \emph{arXiv preprint arXiv:1505.06241}, 2015.

\bibitem{blackburn2016pir}
S.~Blackburn and T.~Etzion, ``{PIR} array codes with optimal pir rate,''
  \emph{arXiv preprint arXiv:1607.00235}, 2016.

\bibitem{freij2016private}
R.~Freij-Hollanti, O.~Gnilke, C.~Hollanti, and D.~Karpuk, ``Private information
  retrieval from coded databases with colluding servers,'' \emph{arXiv preprint
  arXiv:1611.02062}, 2016.

\bibitem{HKS2018}
\BIBentryALTinterwordspacing
A.~Heidarzadeh, F.~Kazemi, and A.~Sprintson, ``Capacity of single-server
  single-message private information retrieval with coded side information,''
  June 2018. [Online]. Available: \url{arXiv:1806.00661}
\BIBentrySTDinterwordspacing

\bibitem{Tandon2017}
R.~Tandon, ``The capacity of cache aided private information retrieval,'' in
  \emph{55th Annual Allerton Conference on Communication, Control, and
  Computing}, Oct 2017, pp. 1078--1082.

\bibitem{Wei2017CacheAidedPI}
Y.-P. Wei, K.~A. Banawan, and S.~Ulukus, ``Cache-aided private information
  retrieval with partially known uncoded prefetching: Fundamental limits,''
  \emph{CoRR}, vol. abs/1712.07021, 2017.

\bibitem{Wei2017FundamentalLO}
------, ``Fundamental limits of cache-aided private information retrieval with
  unknown and uncoded prefetching,'' \emph{CoRR}, vol. abs/1709.01056, 2017.

\bibitem{KGHERS2017}
\BIBentryALTinterwordspacing
S.~Kadhe, B.~Garcia, A.~Heidarzadeh, S.~E. Rouayheb, and A.~Sprintson,
  ``Private information retrieval with side information,'' \emph{CoRR}, vol.
  abs/1709.00112, 2017. [Online]. Available:
  \url{http://arxiv.org/abs/1709.00112}
\BIBentrySTDinterwordspacing

\bibitem{Chen2017side}
\BIBentryALTinterwordspacing
Z.~Chen, Z.~Wang, and S.~Jafar, ``The capacity of private information retrieval
  with private side information,'' \emph{CoRR}, vol. abs/1709.03022, 2017.
  [Online]. Available: \url{http://arxiv.org/abs/1709.03022}
\BIBentrySTDinterwordspacing

\bibitem{Maddah2018}
\BIBentryALTinterwordspacing
S.~P. Shariatpanahi, M.~J. Siavoshani, and M.~A. Maddah-Ali, ``Multi-message
  private information retrieval with private side information,'' May 2018.
  [Online]. Available: \url{arXiv:1805.11892}
\BIBentrySTDinterwordspacing

\end{thebibliography}

\appendix[Tightness of the Result of Theorem~\ref{thm:2} for $D=1$]


For the case of $D=1$, the result of Theorem~\ref{thm:2} shows that the capacity is lower bounded by ${(K-M\lfloor \frac{K}{M+1}\rfloor)^{-1}}$ if ${\frac{K-1}{M+1}\leq \lfloor \frac{K}{M+1}\rfloor}$, and is lower bounded by ${\lceil \frac{K}{M+1}\rceil^{-1}}$ if ${\frac{K-1}{M+1}> \lfloor \frac{K}{M+1}\rfloor}$. In order to show that these lower bounds are equal, it suffices to show that \[K-M\left\lfloor \frac{K}{M+1}\right\rfloor = \left\lceil\frac{K}{M+1}\right\rceil\] when $\frac{K-1}{M+1}\leq \lfloor\frac{K}{M+1}\rfloor$. 

There are two cases: (i) $M+1$ divides $K$, and (ii) $M+1$ does not divide $K$. In the case (i), \[K-M\left\lfloor\frac{K}{M+1}\right\rfloor = K-\frac{MK}{M+1} = \frac{K}{M+1} = \left\lceil \frac{K}{M+1}\right\rceil,\] noting that $\frac{K}{M+1}$ is an integer in this case. This completes the proof for the case (i). 

In the case (ii), from $\frac{K-1}{M+1}\leq \lfloor\frac{K}{M+1}\rfloor$ it follows that 
\begin{align}\nonumber
K-M\left\lfloor \frac{K}{M+1}\right\rfloor & \leq \left\lfloor\frac{K}{M+1}\right\rfloor+1 \\ \label{eq:Last1} & = \left\lceil\frac{K}{M+1}\right\rceil,
\end{align}
noting that $\frac{K}{M+1}$ is not an integer in this case. Since ${\lfloor\frac{K}{M+1}\rfloor< \frac{K}{M+1}}$, then 
\begin{align}\nonumber
K-M\left\lfloor\frac{K}{M+1}\right\rfloor & > K-\frac{MK}{M+1} \\ \nonumber & = \frac{K}{M+1} \\ \label{eq:Last2}& >\left\lfloor\frac{K}{M+1}\right\rfloor.
\end{align} By combining~\eqref{eq:Last1} and~\eqref{eq:Last2}, \[\left\lfloor\frac{K}{M+1}\right\rfloor<K-M\left\lfloor\frac{K}{M+1}\right\rfloor\leq \left\lceil\frac{K}{M+1}\right\rceil,\] and subsequently, \[K-M\left\lfloor\frac{K}{M+1}\right\rfloor = \left\lceil\frac{K}{M+1}\right\rceil.\] The completes the proof for the case (ii). 

\end{document}